\newtheorem{theorem}{Theorem}[section]
\newtheorem{proposition}[theorem]{Proposition}
\newcommand{\ccode}[2]{\par
        \vspace*{8pt}
        {{\leftskip18pt\rightskip\leftskip
        \noindent{\it #1}\/: #2\par}}\par}
\newcommand{\keywords}[1]{\ccode{Keywords}{#1}}
\newcommand{\email}[1]{\href{mailto:#1}{#1}}
\title{\textcolor{Navy}{\textsc{A fractional model for the COVID-19 pandemic: Application to Italian data}}}
\author[1,2]{Elisa Al\`{o}s\thanks{Corresponding author, \email{elisa.alos@upf.edu}}}
\author[3]{Maria Elvira Mancino}
\author[4,5]{Ra\'{u}l Merino}
\author[6]{Simona Sanfelici}
\affil[1]{Dpt. Economia i Empresa, Universitat Pompeu Fabra, \authorcr c/Ramón Trias Fargas, 25-27, Barcelona, Spain.}
\affil[2]{Barcelona Graduate School of Economics, \authorcr c/Ramón Trias Fargas, 25-27, Barcelona, Spain.}
\affil[3]{Department of Economics and Management, University of Florence, \authorcr Via delle Pandette 32,50127 Florence, Italy.}
\affil[4]{Facultat de Matem\`{a}tiques i Inform\`{a}tica, Universitat de Barcelona, \authorcr Gran Via 585, 08007 Barcelona, Spain.}
\affil[5]{VidaCaixa S.A., Market Risk Management Unit, \authorcr C/Juan Gris, 2-8, 08014 Barcelona, Spain.}
\affil[6]{Department of Economics and Management, University of Parma, \authorcr Via J.F. Kennedy, Parma - Italy.}
\date{\normalfont\small\today}
\begin{document}

\maketitle

\begin{abstract}
We provide a probabilistic SIRD model for the COVID-19 pandemic in Italy, where we allow the infection, recovery and death rates to be random. In particular, the underlying random factor is driven by a fractional Brownian motion. Our model is simple and needs only some few parameters to be calibrated.
\end{abstract}

\keywords{SIRD model, fractional Brownian Motion, COVID-19}
\ccode{MSC classification}{92 C 60, 92 D 30,  60 G 22}
\ccode{JEL classification}{C 02, C 32, C 63, I12}


\clearpage

\allowdisplaybreaks

\section{Introduction}
Since the start of the COVID-19 pandemic many models have been proposed in the literature to explain its  dynamics. Most of these approaches are compartmental models, where the population is divided into compartments that describe the different situations regarding the infectious disease (like susceptible,  infected, etc.), and people progress between them. The basic reference compartmental model is the SIR model, where the population is divided into susceptible (S), infected (I) and recovered (R).
As this model is too simple to describe the complexity of several epidemics, some extensions have been proposed in the literature. For example , the SIRD model considers also the compartment of deaths (D), and the SEIR introduces exposed (E). Some other classical models, like the SEIRS, also allow to model the lost of immunity after recovery. Some recent approaches pay special attention to non directly observable compartments as asymptomatic (A) (see, for example, \cite{IFRV} and the references therein) that, even not being directly observable, play a crucial role in the pandemic. Other extensions consider time-dependent parameters, as in \cite{C}, where the kinetic of the rates of infection ($\beta$) and death ($\mu$) are modeled by exponential functions, while the recovery rate ($\gamma$) is of logistic type. Recent literature also exploits stochastic models, where the main variables account for a noise (Brownian) component, or branching processes \cite{Changguo, Plank, Platen, Yanev}.

\vspace{0.5cm}

In this paper, we introduce a probabilistic SIRD model, where we allow the coefficients to be stochastic  processes determined by a few number of parameters. This randomness is a way to accommodate the effect of several unobservable factors, as the loose of immunity or the existence of asymptomatic. Our construction of the model is motivated by the descriptive analysis of the parameter time series, where we observe that the $\beta$ returns have negatively correlated increments, an observation that suggests to model them  by a fractional Brownian motion (fBm) with a Hurst parameter $H<\frac12$ \cite{Mandelbrot}. We recall that this approach does not need to consider a drift for $\beta$, but this drift arises simply by the properties of the fBm. Once modeled $\beta$, we see that simple relationships between the evolution of diagnosed and infected, allow us to model $\gamma$ and $\mu$.

\vspace{0.5cm}

The model is calibrated in such a way the mean paths of infected, death and recovered fit the corresponding observed values, as well as the variability of $\beta,\gamma$, and $\mu$. Our approach is not only able to adjust observed data, but also to study the different possible scenarios according to its stochastic behavior.

\vspace{0.5cm}

The paper is organized as follows. In Section 2 we present a descriptive analysis of $\beta$, $\gamma$ and $\mu$ corresponding to the evolution of the pandemic in Italy ( \url{https://raw.githubusercontent.com/pcm-dpc/COVID-19/master/dati-andamento-nazionale/dpc-covid19-ita-andamento-nazionale.csv}). Section 3 is devoted to present our stochastic SIRD model for the Italian COVID-19 outbreak. The model is calibrated in Section 4, while we simulate some different scenarios in Section 5. Finally, a discussion of the results and proposals of future research are presented in Section 6.

\section{A descriptive analysis of the SIRD model in the Italian COVID-19 outbreak}\label{Section descriptive}
Let us consider a stochastic SIRD model of the form
\begin{equation}
\begin{cases}
S_{n+1}=S_n-\beta_n I_n S_n/N\\
I_{n+1}=I_n(1+\beta_n S_n/N -\gamma_n-\mu_n)\\
R_{n+1}=R_n+\gamma_n I_n\\
D_{n+1}=D_n+\mu_n I_n,
\end{cases}
\end{equation}
where $S,I,R,D=\{S_n,I_n,R_n,D_n, n=1,...,150\}$ denote the number of daily observed susceptible, infected, recovered and death, N is the population size and $\beta, \gamma,\mu=\{\beta_n,\gamma_n,\mu_n, n=1,...,150\}$ represent the rates of infection, recovery and death. As $S/N\approx 1$ in all the data set, we consider the following simpler version of the above model
\begin{equation}
\label{stochasticSIRD}
\begin{cases}
S_{n+1}=S_n-\beta_n I_n \\
I_{n+1}=I_n(1+\beta_n -\gamma_n-\mu_n)\\
R_{n+1}=R_n+\gamma I_n\\
D_{n+1}=D_n+\mu I_n,
\end{cases}
\end{equation}
Now we observe the behavior of this model in Italy in the period from the 24/2/2020 to the 28/7/2020. In Figure \ref{Fig:Infected-Death-Recovered} we can observe the paths of $I$ ({\it  totale positivi}  in the data set), $R$ ({\it dimessi-guariti}), $D$ ({\it deceduti}) and the total number of diagnosed ($I+R+D$, {\it totale casi}).
\begin{figure}[H]
\centering
\includegraphics[scale=0.18]{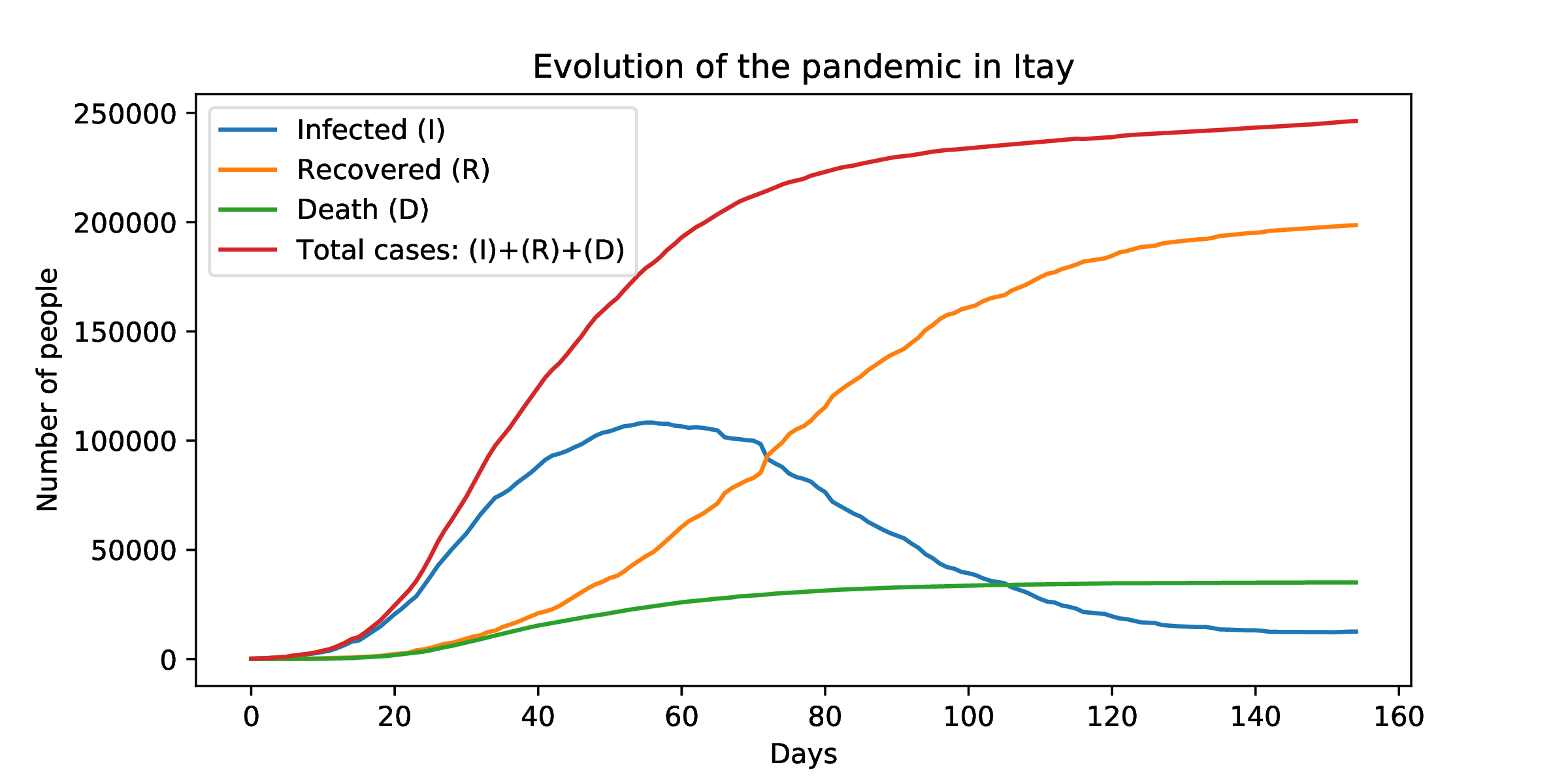}
\caption{Evolution pandemic in Italy.}
\label{Fig:Infected-Death-Recovered}
\end{figure}

\subsection{The process \texorpdfstring{$\beta$}{beta} }
Now let us observe $\beta, \gamma,\mu$. In Figure \ref{Fig:beta} we can see the behavior of $\beta$, that is a decreasing function of time. This fits what expected due to the lock-down, that reduced the number of contacts between individuals. In Figure \ref{Fig:Increments beta}, we can see the corresponding increments (i.e., $\Delta \beta= \beta_{n+1}-\beta_n$), that are more variable at the beginning of the sample, when $\beta$ is higher.
\begin{figure}[H]
\centering
\includegraphics[scale=0.18]{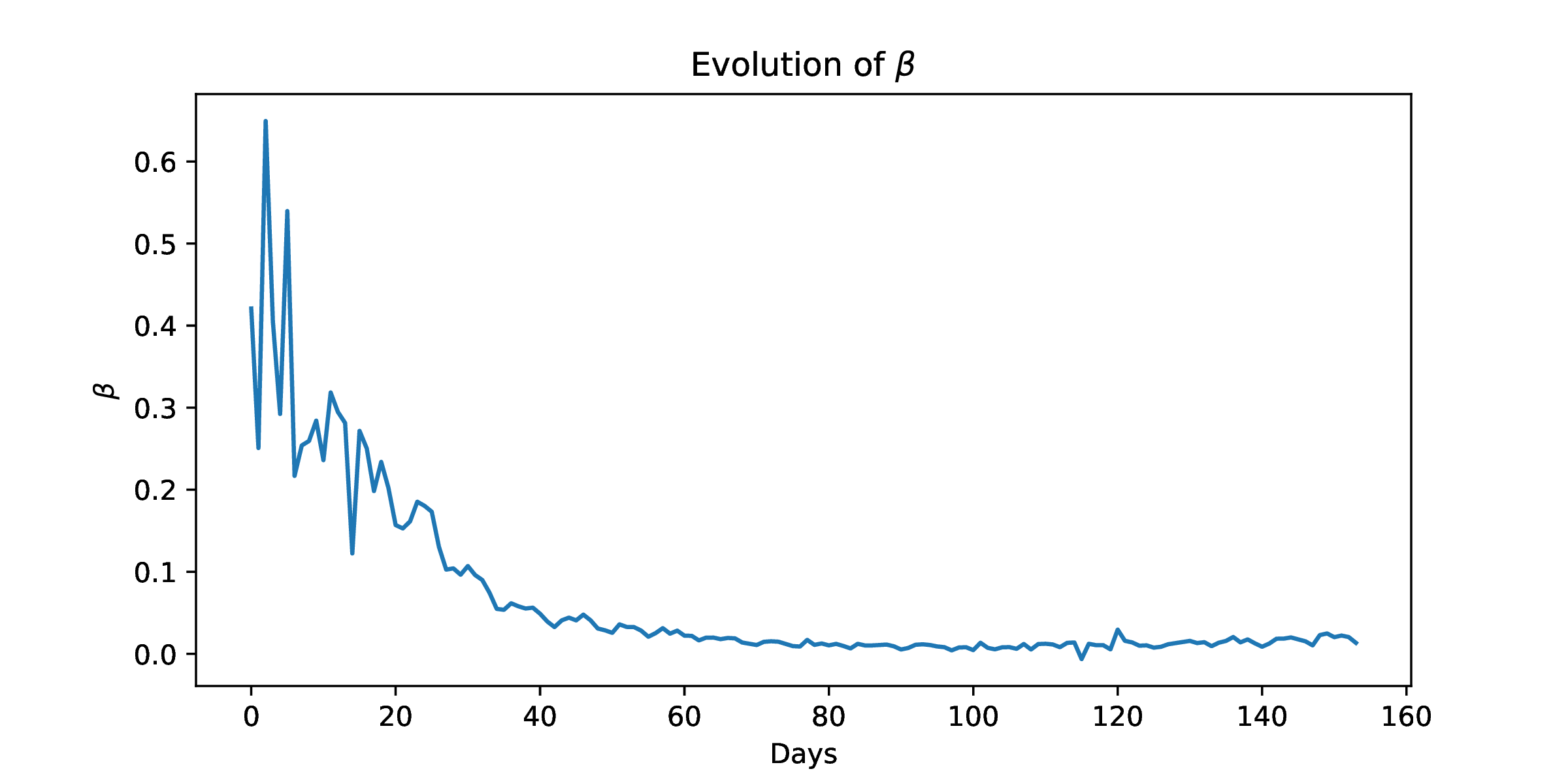}
\caption{Evolution of $\beta$.}
\label{Fig:beta}
\end{figure}
\begin{figure}[H]
\centering
\includegraphics[scale=0.18]{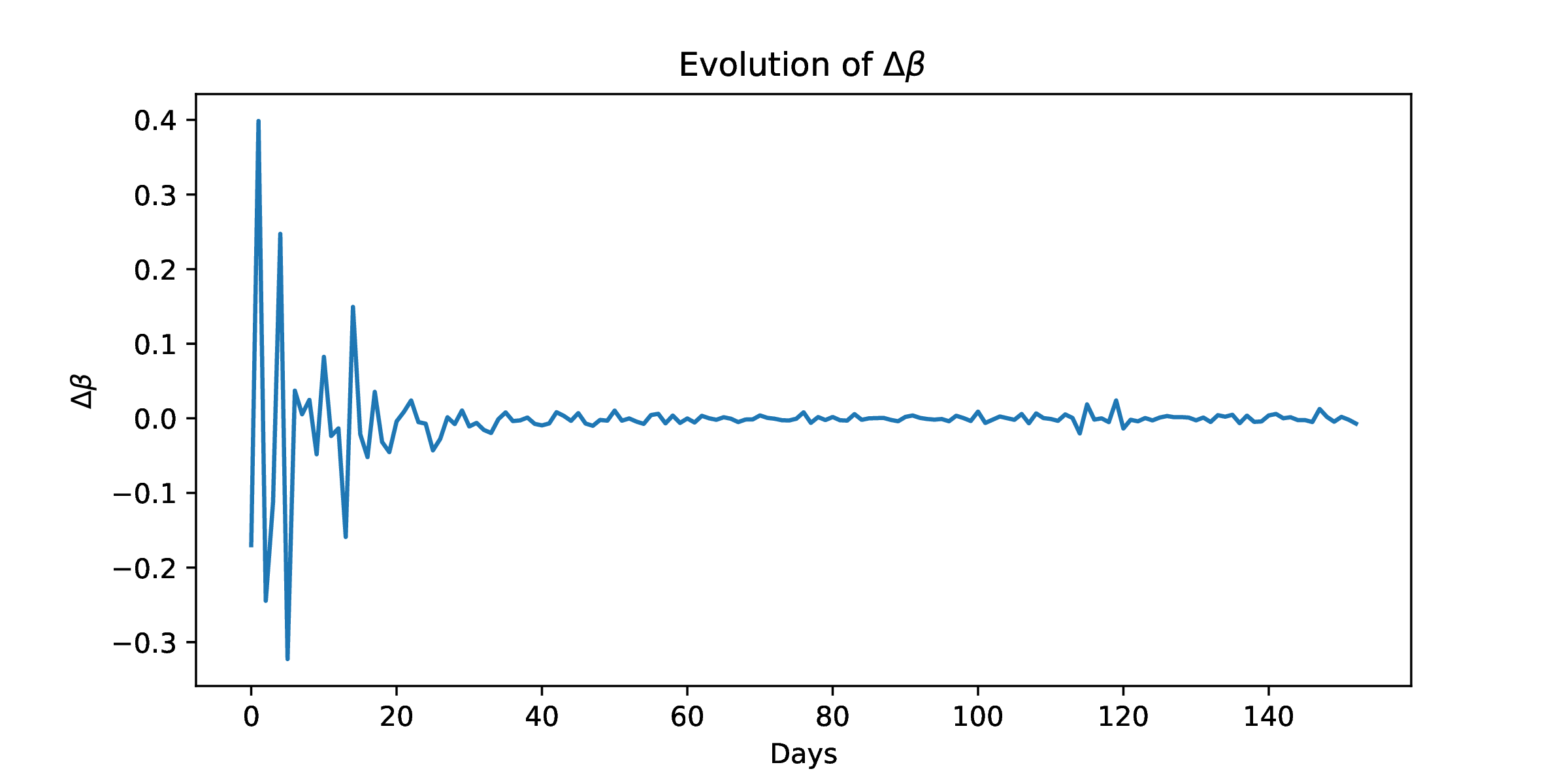}
\caption{Increments of $\beta$.}
\label{Fig:Increments beta}
\end{figure}

Moreover, in Figure \ref{Fig:Evolution of the beta returns} we can see the returns $\frac{\Delta \beta}{\beta}$ that have a stationary mean. After deleting the outliers at days 115 and 119 (making them equal to zero), the corresponding autocorrelation function, see Figure \ref{Fig:beta autocorrelation function}, reveals a negative correlation between increments. The empirical analysis suggests to model $\beta$ as a process with negative correlated increments with a decreasing mean. This process is designed in Section 3.
\begin{figure}[H]
\centering
\includegraphics[scale=0.18]{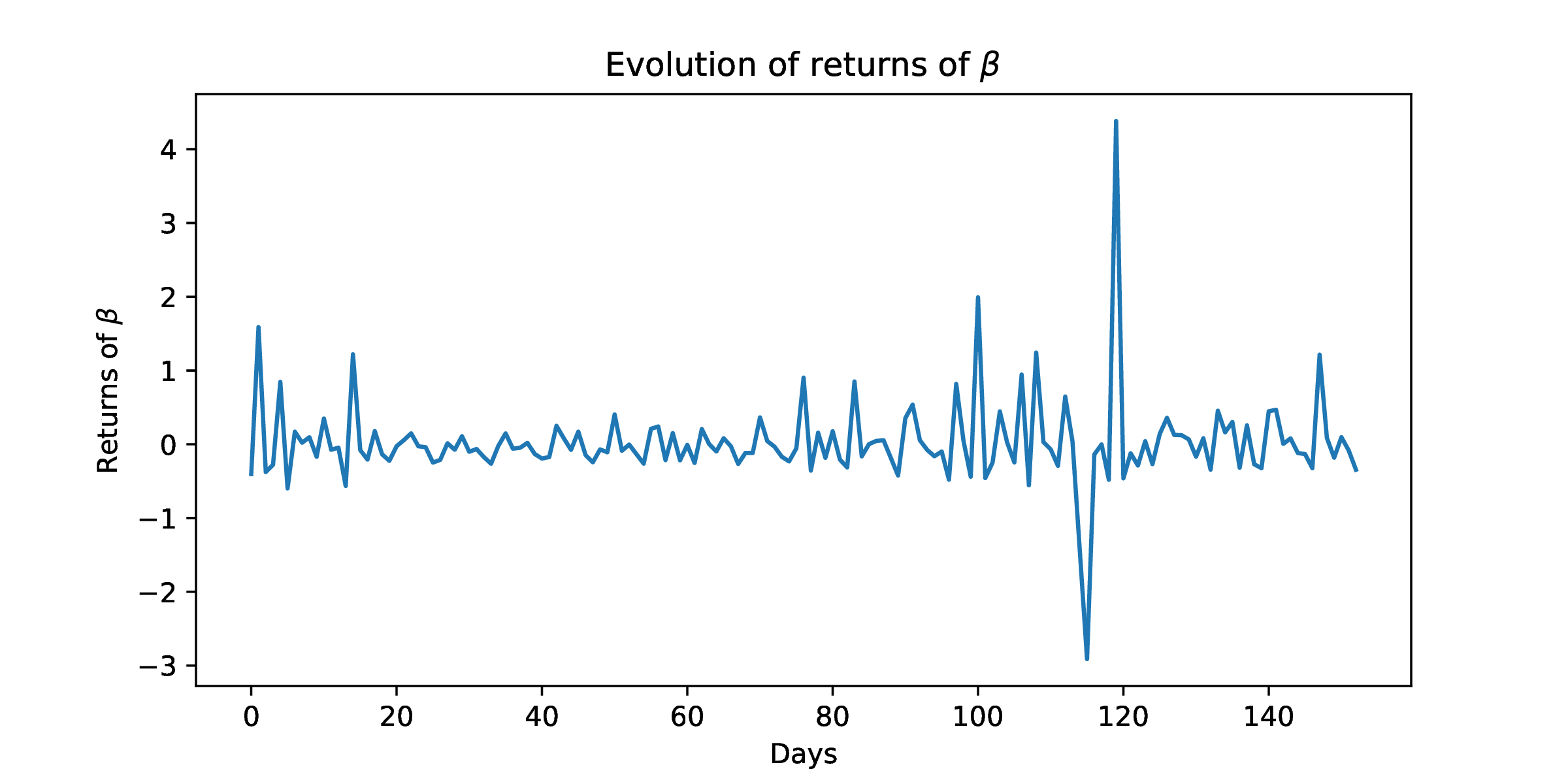}
\caption{Evolution of the returns of $\beta$.}
\label{Fig:Evolution of the beta returns}
\end{figure}
\begin{figure}[H]
\centering
\includegraphics[scale=0.18]{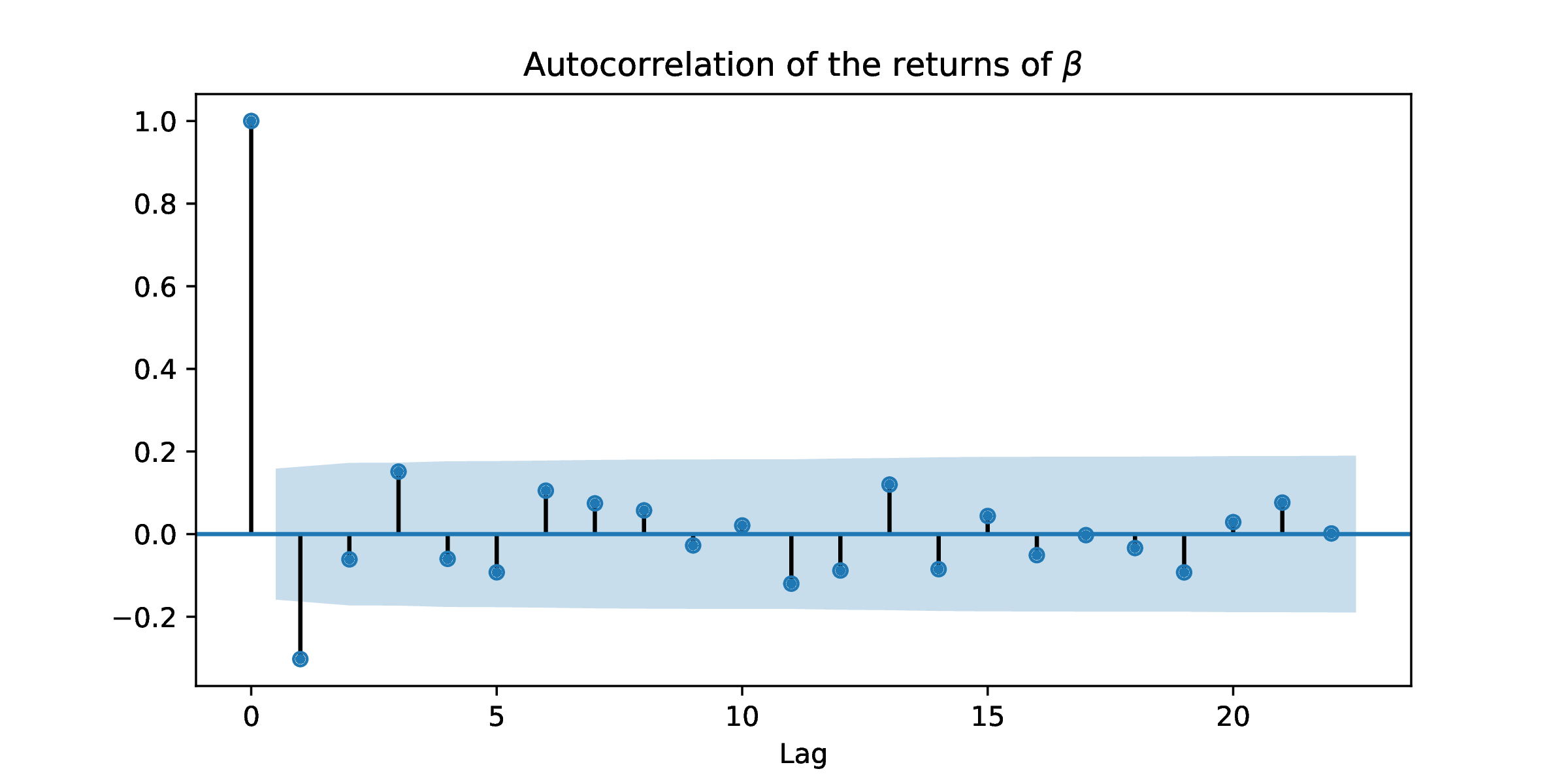}
\caption{Autocorrelation function of $\beta$ returns.}
\label{Fig:beta autocorrelation function}
\end{figure}

\subsection{The process \texorpdfstring{$\mu$}{mu}}
The time behavior of $\mu$ can be seen in Figure \ref{Fig:Evolution of mu}. We observe that it is similar to the behavior of $\beta$, but with a delay. Moreover, Figures \ref{Fig:infected vs death} and \ref{Fig:infected vs death2} show a clear relationship between the daily new diagnosed and deaths, with a delay of 4 days. \footnote{This finding appears consistent with data by Istituto Superiore di Sanit\`{a} \url{https://www.epicentro.iss.it/coronavirus/bollettino/Report-COVID-2019_17_marzo-v2.pdf}.}
In Figure \ref{Fig:diag vs death1}, we can see a strong link between total diagnosed and deaths. In  Figures \ref{Fig: Linear Dep delta infected vs death} and
\ref{Fig: Linear Dep infected vs death} we can see that this relationship is linear.
Then, as the number of new infected is given by $\beta I$, the daily increments of deaths ($\Delta D$) would be modeled simply as  $c_\mu\beta_{n-4}I_{n-4}$, for some positive constant $c_\mu$. This leads to the following model for $\mu$:
$$\mu_n=c_\mu\beta_{n-4}I_{n-4}/I_n,$$

\begin{figure}[H]
\centering
\includegraphics[scale=0.18]{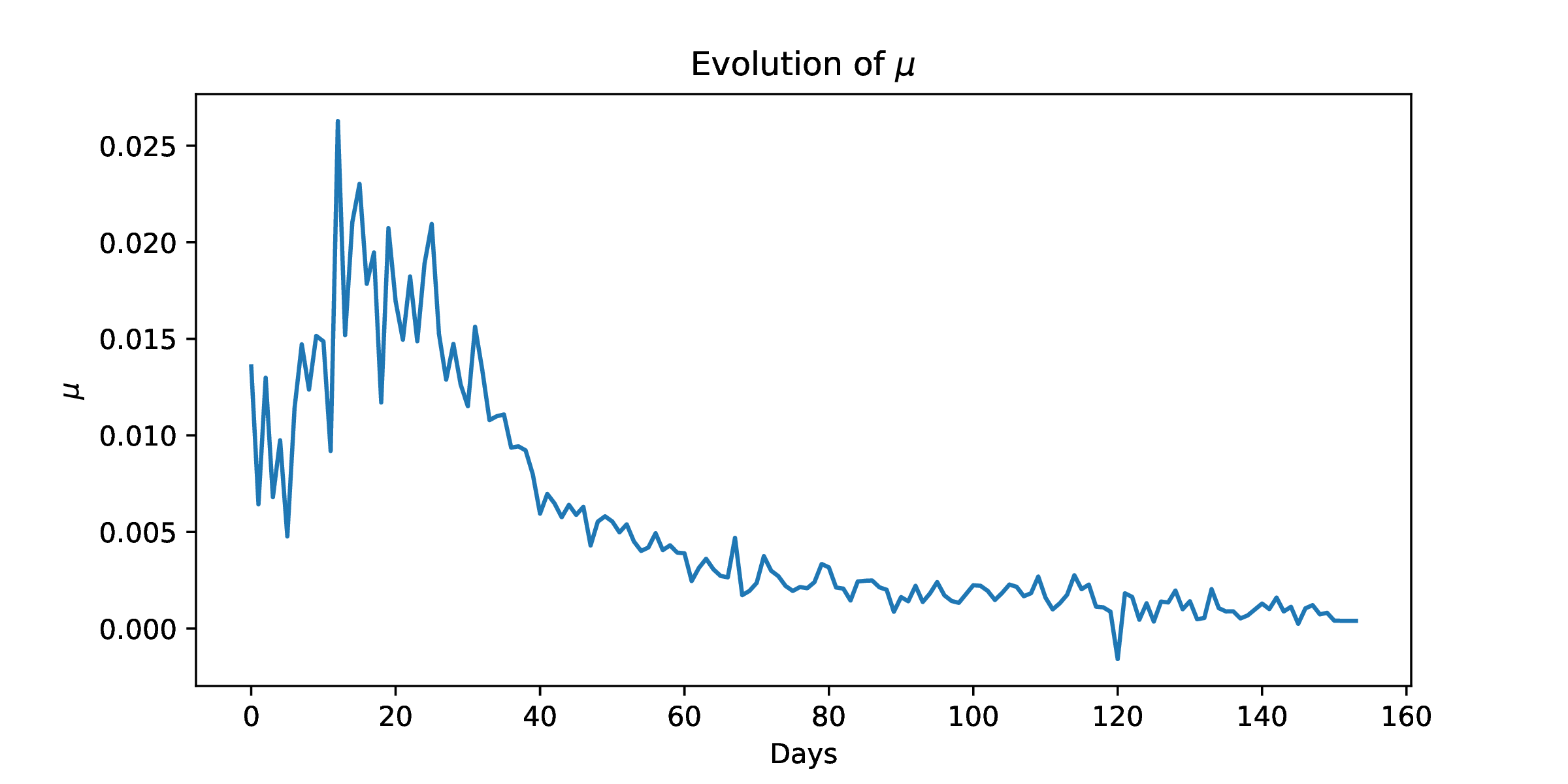}
\caption{Evolution of $\mu$.}
\label{Fig:Evolution of mu}
\end{figure}

\begin{figure}[H]
\centering
\includegraphics[scale=0.18]{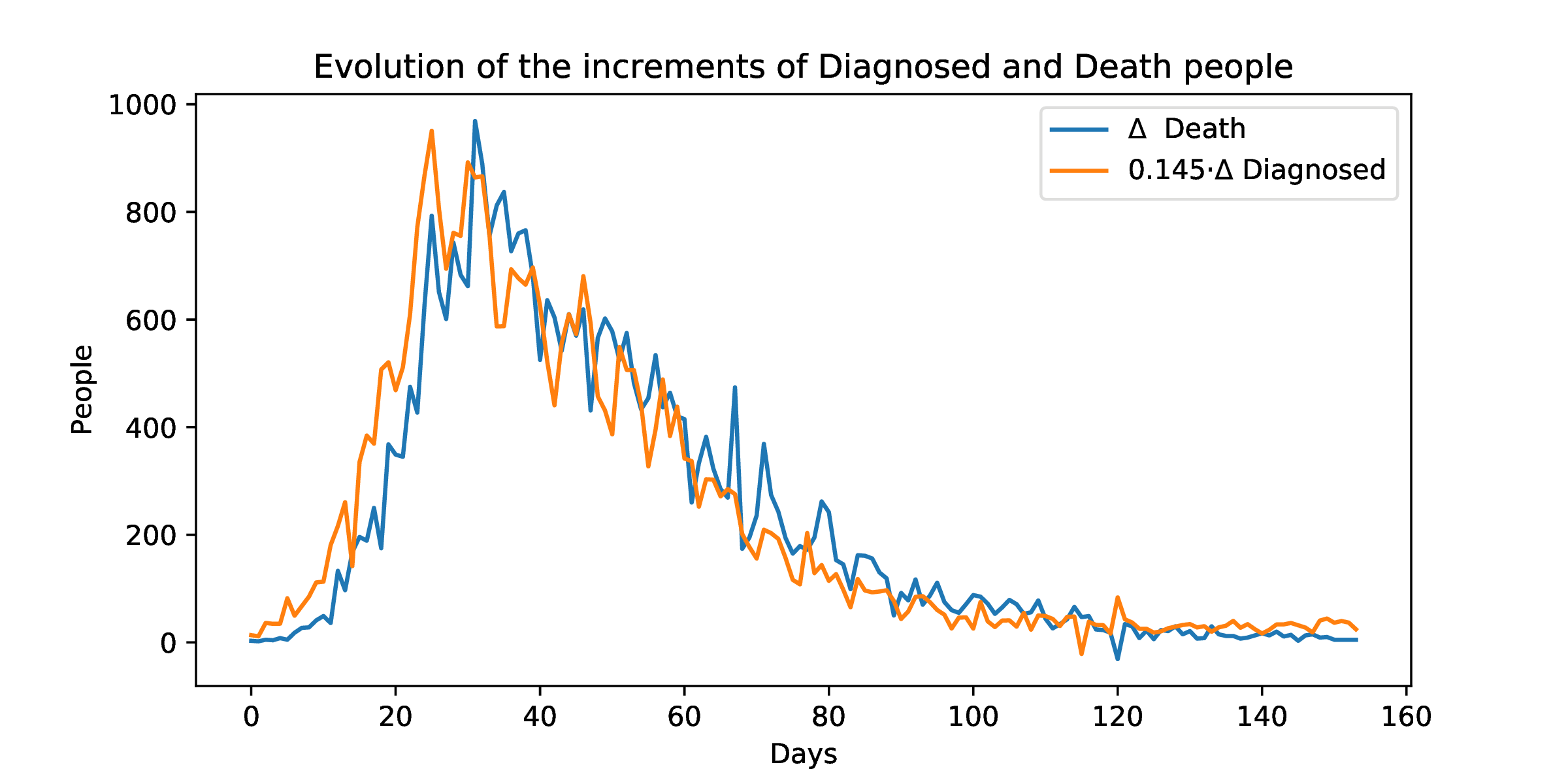}
\caption{Comparing the time series of 0.14$\cdot$ $\Delta$ Infected against $\Delta$ Death.}
\label{Fig:infected vs death}
\end{figure}

\begin{figure}[H]
\centering
\includegraphics[scale=0.18]{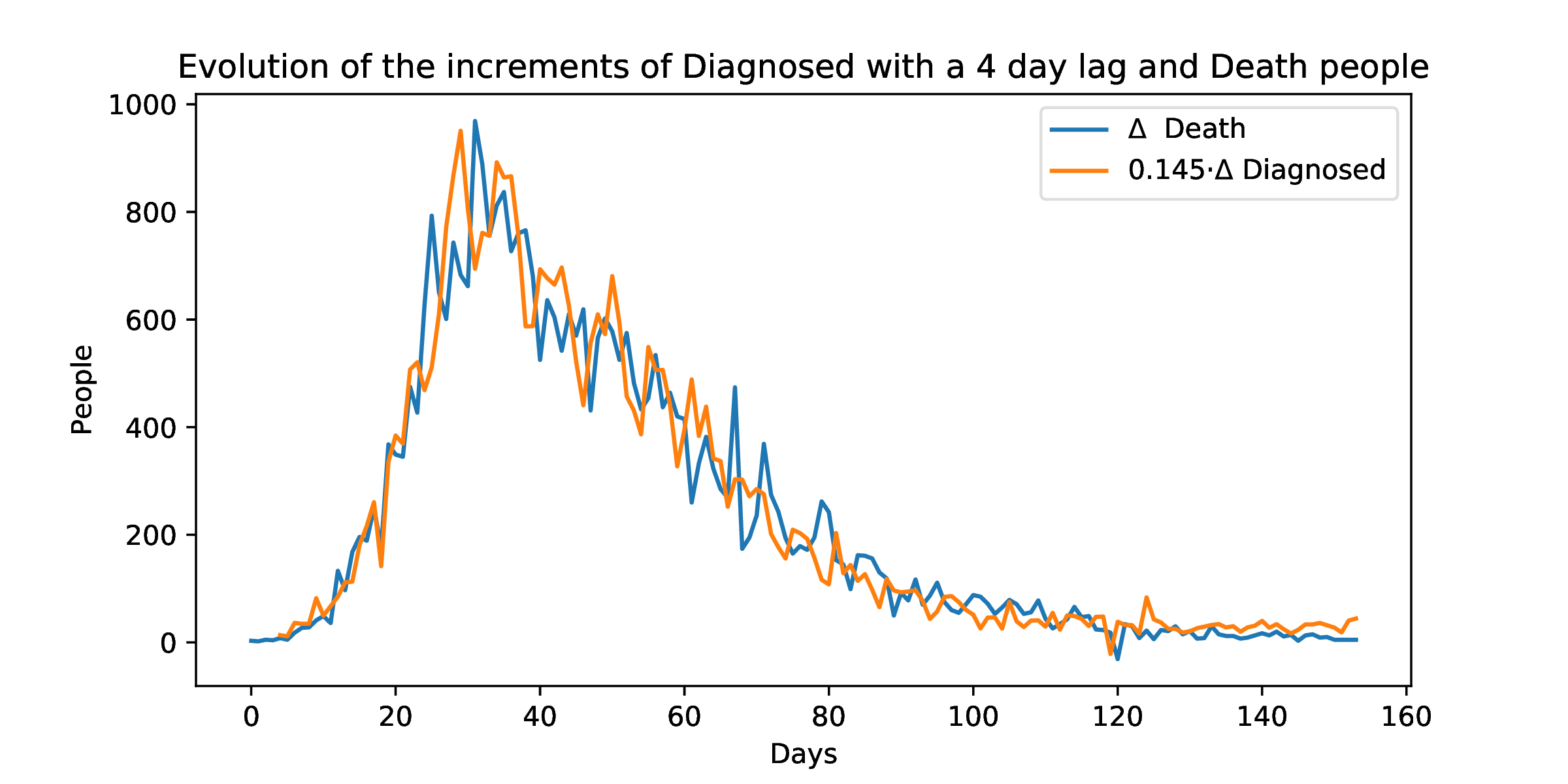}
\caption{Comparing the time series of 0.14$\cdot$ $\Delta$ Infected against $\Delta$ Death.}
\label{Fig:infected vs death2}
\end{figure}

\begin{figure}[H]
\centering
\includegraphics[scale=0.18]{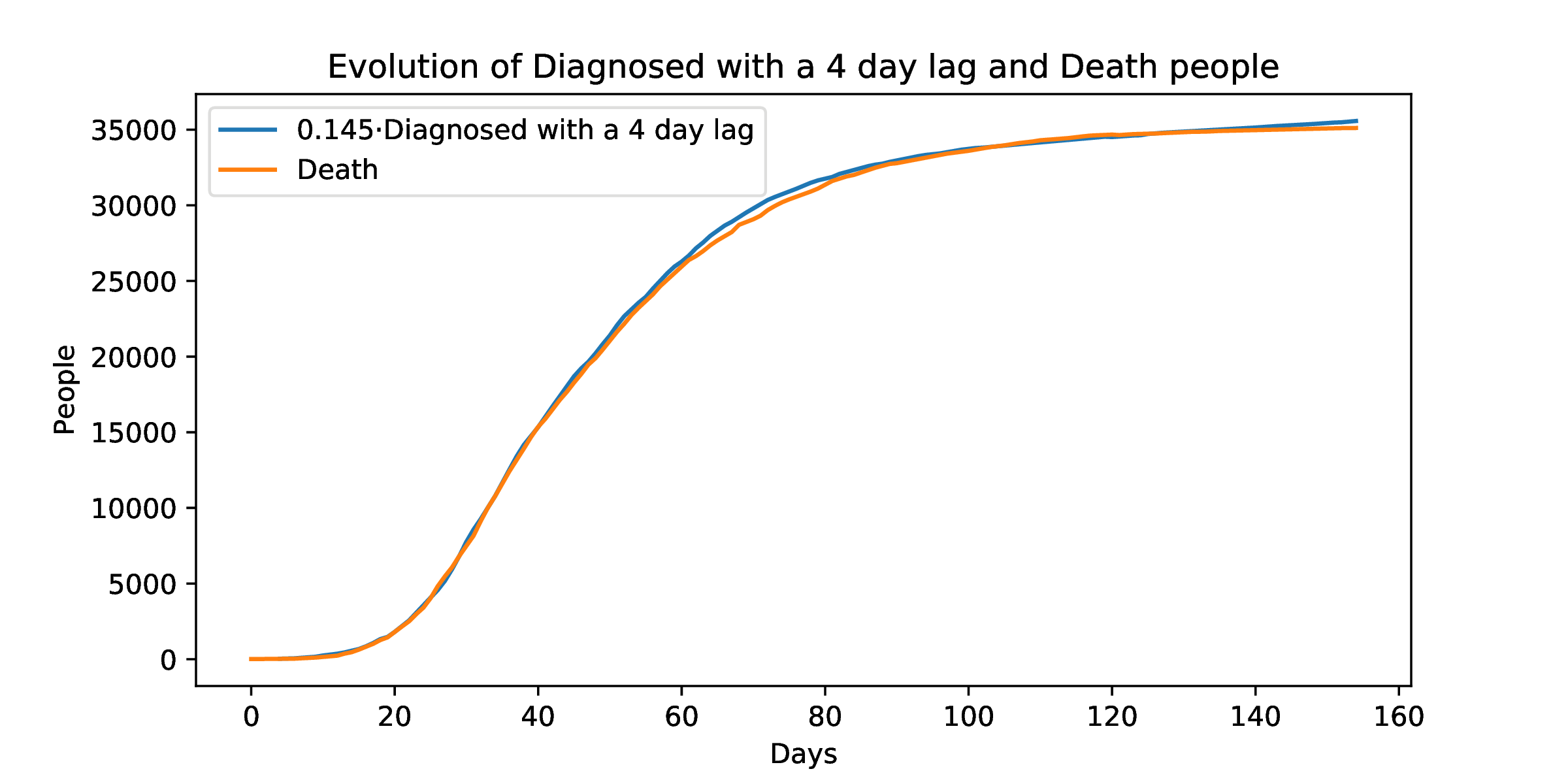}
\caption{Comparing the time series of diagnosed (I+R+D) with 4 days lag against Death.}
\label{Fig:diag vs death1}
\end{figure}

\begin{figure}[H]
\centering
\includegraphics[scale=0.18]{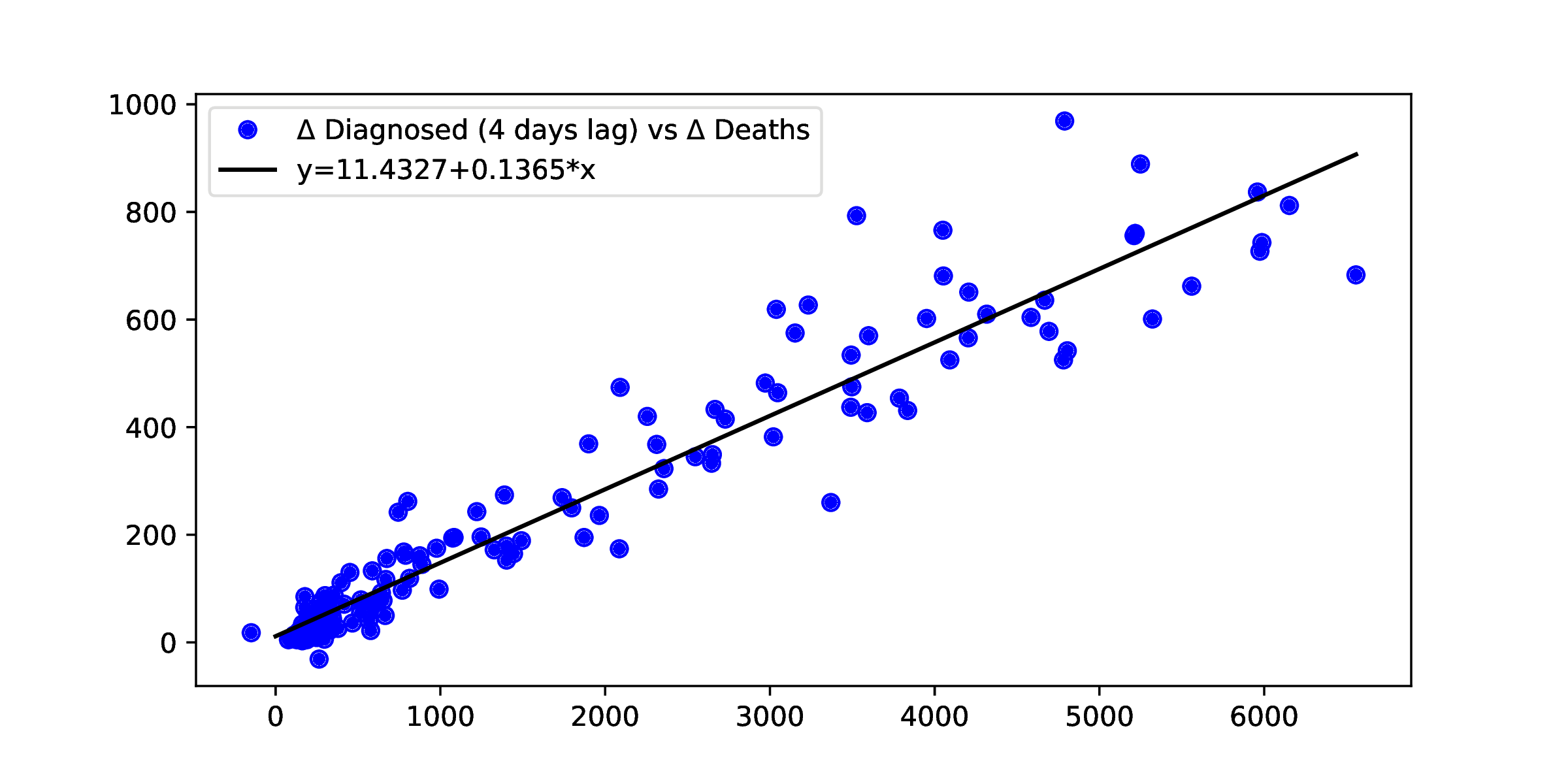}
\caption{Linear Dependence between $\Delta$  Diagnosed (with 4 days lag) and $\Delta$ Death}
\label{Fig: Linear Dep delta infected vs death}
\end{figure}

\begin{figure}[H]
\centering
\includegraphics[scale=0.18]{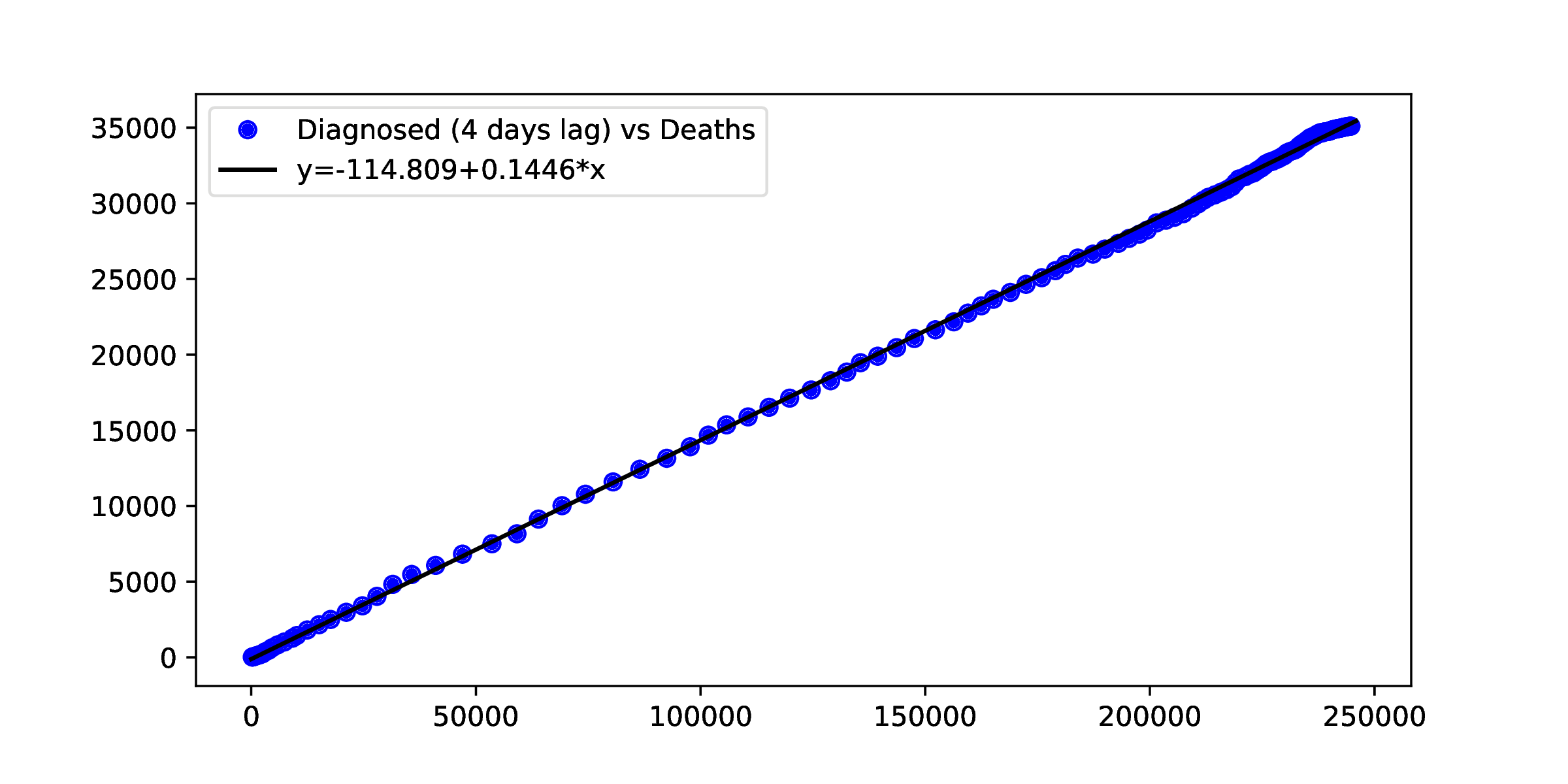}
\caption{Linear Dependence between total number of Diagnosed (with 4 days lag) and Deaths}
\label{Fig: Linear Dep infected vs death}
\end{figure}

\subsection{The process \texorpdfstring{$\gamma$}{gamma}}

Finally, let us observe the path of $\gamma$ in  Figure \ref{Fig:Evolution of gamma}. This recovery rate seems to move between an upper and a lower bound, that we can interpret as the  rate under stressed conditions of the health system, and the rate when this system adapts to the new scenario and the epidemic curve decays. This is connected to the approach in \cite{C}, where $\gamma$ is assumed to be described by a logistic function, and the rate of recovery increases from a initial value $\gamma_0$ and it stabilizes at some higher level.

\begin{figure}[H]
\centering
\includegraphics[scale=0.18]{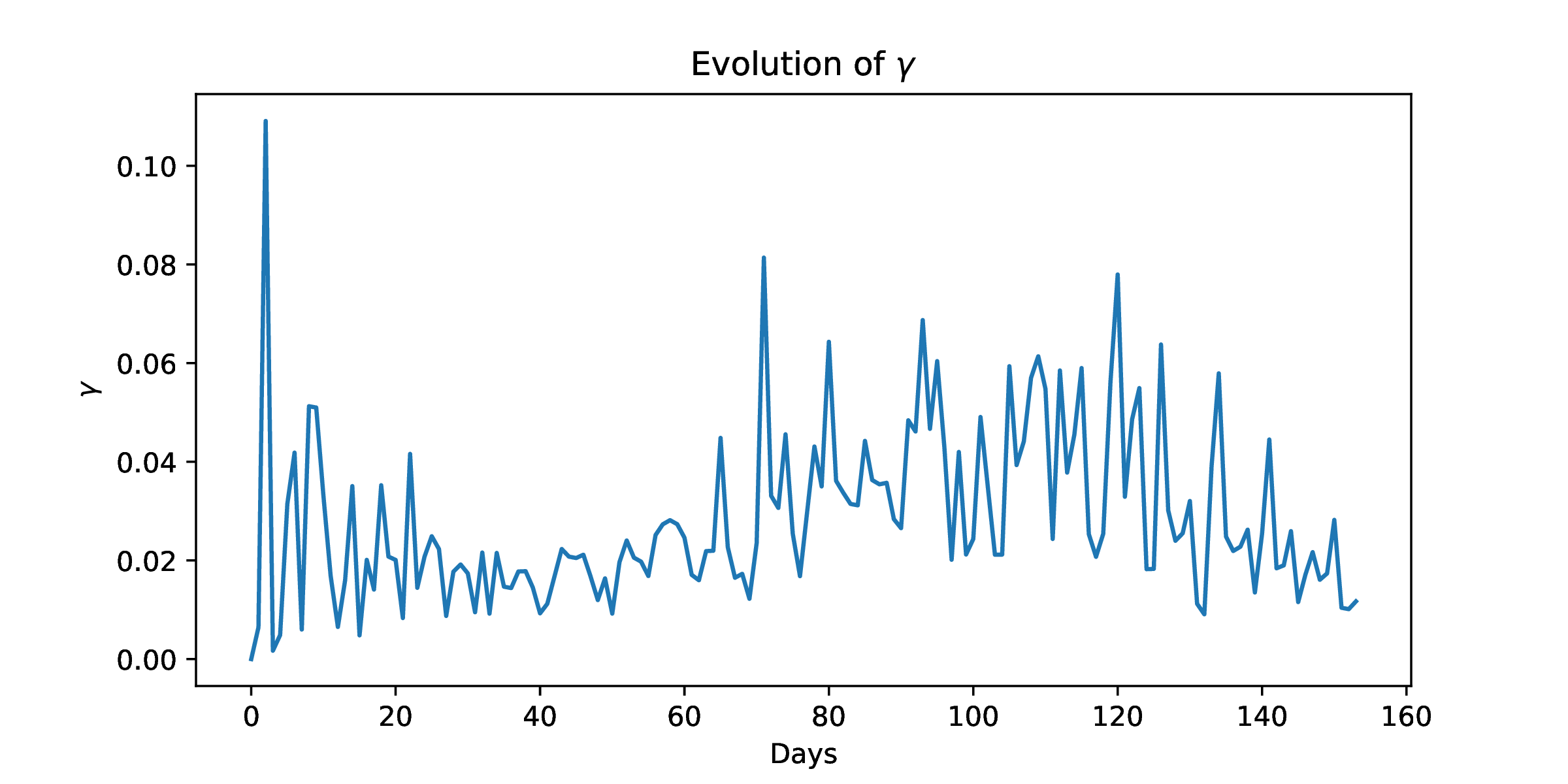}
\caption{Evolution of $\gamma$.}
\label{Fig:Evolution of gamma}
\end{figure}

A measure of the stress of the model can be defined as the quantity
\begin{equation}
\label{stress}
(I+R+D)_{-30n}/(I+R+D)_n,
\end{equation}
 where $(I+R+D)_{-30n}$ is the average of the diagnosed people the last $30$ days. If there are no new infections, the number of diagnosed remains stable, and then the above quantity is near 1. Otherwise, a high increment of new cases translates into a decrease of the value of this ratio. In Figure \ref{logistic}, we can see the behavior of this process
\begin{figure}[H]
\centering
\includegraphics[scale=0.18]{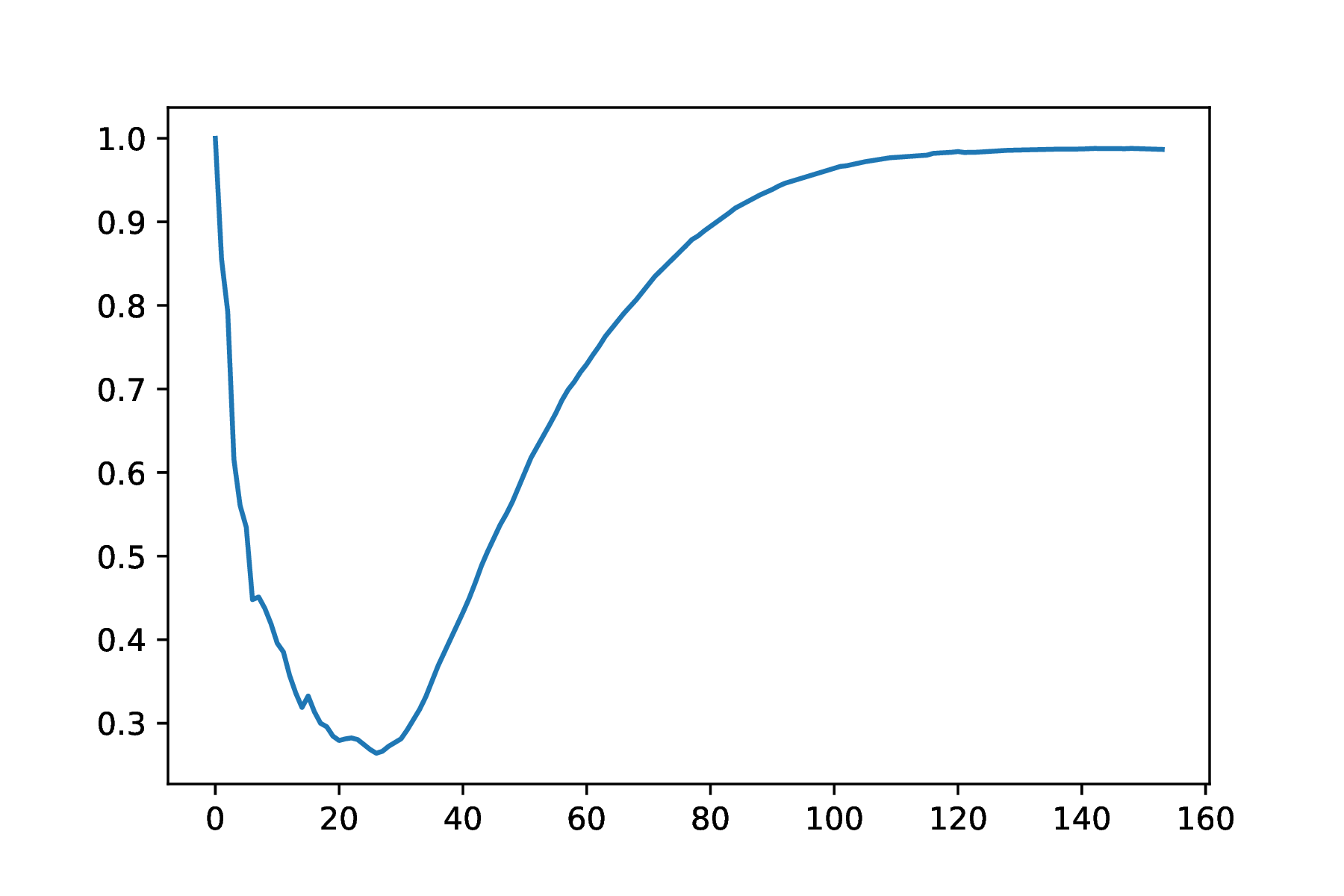}
\caption{$(I+R+D)_{-30n}/(I+R+D)_n$}
\label{logistic}
\end{figure}

We can also observe in Figure \ref{Fig:Evolution of gamma} a big variance, as expected due to the fact that  not everybody recovers at the same speed. Then we would like to add some noise, multiplying this time series by an adequate random factor, as we detail in the following section.

\section{A stochastic SIRD model}
We propose a stochastic SIRD model where randomness is embedded by modeling the parameters of infection ($\beta$), mortality ($\mu$) and recovery ($\gamma$). In the following subsections, we will describe the model for each one of the parameters.

\subsection{A stochastic model for \texorpdfstring{$\beta$}{beta}.}
As the return of $\beta$  are negatively correlated, it is natural to construct a model based on the fractional Brownian motion (fBm). The fBm  is a Gaussian process with
stationary increments, which depends on a parameter $H\in (0,1)$ called the Hurst parameter.
More precisely, a process $B^H={B_t^H, t\in [0,T]}$ is called a fractional Brownian motion (fBm) if
\begin{itemize}
\item $E(B_0^H)=0.$
\item $E(B_t^H B_s^H)=\frac12(s^{2H}+t^{2H}-|t-s|^{2H})$.
\end{itemize}
Except in the Brownian motion case $H=1/2$, the fBm increments are correlated. Denote $X_n=B_n^H-B_{n-1}^H$
and define $\rho_H(n):=Cov (X_1,X_{n+1})$. Then we have

\begin{eqnarray}
\label{correlation}
\rho_H(n)&=&E(B_1^H(B_{n+1}^H-B_n^H))\nonumber\\
&=&\frac12(1^{2H}+(n+1)^{2H}-n^{2H})\nonumber\\
&&-\frac12(1^{2H}+n^{2H} -(n-1)^{2H})\nonumber\\
&&=\frac12\left((n+1)^{2H}+(n-1)^{2H}-2n^{2H}\right)
\end{eqnarray}
Notice that this quantity is positive for $H>\frac12$ and negative if $H<\frac12$.

\medbreak

Then, a natural candidate to model $\beta$ is given by
\begin{equation}
\label{eqbeta}
\hat{\beta}_{n+1}=\hat{\beta}_{n}(1+(B_{n+1}^H-B_n^H)),
\end{equation}
where $B^H$ is a fractional Brownian motion with $H<\frac12$ and $k$ is a constant. Apart from the negative correlated returns, this process has also a decreasing mean, as we prove in the following result.
\begin{proposition}
Consider a fractional Brownian motion $B^H$ with $H<\frac12$ and the process defined by (\ref{eqbeta}). Then, provided $\beta_0>0$,
$E(\hat{\beta}_n)$ is decreasing as $n\to\infty$.
\end{proposition}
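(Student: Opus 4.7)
The plan is to unroll the recursion and control consecutive differences. Setting $X_k := B^H_k - B^H_{k-1}$, the $X_k$ form a stationary centered Gaussian sequence with $\mathrm{Cov}(X_i,X_j)=\rho_H(|i-j|)$ from (\ref{correlation}), and iterating (\ref{eqbeta}) yields $\hat{\beta}_n = \hat{\beta}_0\prod_{k=1}^n(1+X_k)$. Since
\[
E[\hat{\beta}_{n+1}]-E[\hat{\beta}_n] \;=\; \hat{\beta}_0\,E\!\left[X_{n+1}\prod_{k=1}^n(1+X_k)\right]
\]
and $\hat{\beta}_0>0$, the claim reduces to showing the right-hand expectation is nonpositive for every $n$.

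First I would pin down the sign of the covariances. The expression $\rho_H(m)=\tfrac{1}{2}((m+1)^{2H}+(m-1)^{2H}-2m^{2H})$ in (\ref{correlation}) is exactly the second difference at $m$ of $x\mapsto x^{2H}$; since $H<1/2$ makes $2H<1$, this function is strictly concave and hence $\rho_H(m)<0$ for every $m\ge 1$. Next I would apply Gaussian integration by parts (Stein's identity): for a centered jointly Gaussian vector $(X_1,\dots,X_{n+1})$ and a sufficiently regular $f$,
\[
E\!\left[X_{n+1}\,f(X_1,\dots,X_n)\right] \;=\; \sum_{j=1}^{n} \rho_H(n+1-j)\, E\!\left[(\partial_j f)(X_1,\dots,X_n)\right].
\]
Taking $f(x)=\prod_{k=1}^n(1+x_k)$, so that $\partial_j f=\prod_{k\ne j}(1+x_k)$, gives
\[
E\!\left[X_{n+1}\prod_{k=1}^n(1+X_k)\right] \;=\; \sum_{j=1}^{n} \rho_H(n+1-j)\, E\!\left[\prod_{\substack{k=1\\ k\ne j}}^{n}(1+X_k)\right].
\]
Each coefficient $\rho_H(n+1-j)$ is strictly negative, so nonpositivity of the sum reduces to nonnegativity of the residual expectations $E[\prod_{k\ne j}(1+X_k)]$.

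The final and hardest step is to establish this nonnegativity. I would argue by induction on the cardinality of the index set, once more applying the Gaussian integration-by-parts identity to express an order-$m$ product expectation in terms of order-$(m{-}2)$ ones. The Wick/Isserlis expansion of $E[\prod_{k\in A}(1+X_k)]$ is a sign-alternating sum over pair partitions of even subsets of $A$ weighted by products of $\rho_H$, and the task is to show that the constant term $1$ dominates the rest. For this I would exploit the telescoping identity $\sum_{m=1}^\infty \rho_H(m) = -\tfrac{1}{2}$ (obtained by summing (\ref{correlation})) together with the bound $|\rho_H(1)|=1-2^{2H-1}\le 1/2$; these give uniform smallness of the off-diagonal correlations, which should be enough to close the induction. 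This positivity control is where the main technical effort lies; combining it with the negativity of $\rho_H$ then yields $E[\hat{\beta}_{n+1}]\le E[\hat{\beta}_n]$ and completes the proof.
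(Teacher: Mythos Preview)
Your route differs from the paper's: the paper expands $E\big[\prod_{m}(1+\Delta B_m^H)\big]$ directly via Isserlis' theorem and then asserts monotonicity from the negativity of each $\rho_H(m)$, whereas you attack the consecutive difference through Stein's identity and reduce everything to showing $E\big[\prod_{k\ne j}(1+X_k)\big]\ge 0$. Your reduction is sharper, and you are right to flag this last step as the crux.

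Unfortunately that nonnegativity fails, and with it the whole scheme. Take $H=0.1$ (the value the paper itself uses), so that $\rho_H(1)\approx -0.426$, $\rho_H(2)\approx -0.026$, $\rho_H(3)\approx -0.012$. For four consecutive increments Isserlis gives
\[
E\Big[\prod_{k=1}^{4}(1+X_k)\Big]
=1+\big(3\rho_H(1)+2\rho_H(2)+\rho_H(3)\big)+\big(\rho_H(1)^2+\rho_H(2)^2+\rho_H(1)\rho_H(3)\big)
\approx -0.15<0.
\]
The bounds you propose cannot rescue this: for $n$ consecutive indices the pair sum is $\sum_{d=1}^{n-1}(n-d)\rho_H(d)$, which behaves like $-(n-1)/2$ and overwhelms the constant term $1$ already by $n=4$, so the ``$1$ dominates the rest'' heuristic is false and no induction on $|A|$ can close. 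Feeding the negative value back into your Stein sum for $n=5$, the dominant contribution is $\rho_H(1)\,E\big[\prod_{k=1}^{4}(1+X_k)\big]\approx(-0.426)(-0.15)>0$, while the remaining terms carry the tiny weights $\rho_H(d)$, $d\ge 2$; the net is positive, i.e.\ $E[\hat\beta_6]>E[\hat\beta_5]$. Hence not only does your argument break, the monotonicity statement itself appears to fail for small $H$. The paper's own proof shares the gap for the same reason: in its Isserlis expansion the terms with $i\ge 4$ are products of an \emph{even} number of negative covariances and hence positive, so the inference ``all $\rho_H<0$, therefore $E(A_n)$ is decreasing'' is not justified there either.
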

\begin{proof}
A recursive computation gives us that
$$
\hat{\beta}_n:=\hat{\beta}_0 A_n,
$$
with
$$
A_n:=\Pi_{m=1}^n (1+k\Delta B_m^H),
$$
where $\Delta B_m^H=B_{m+1}^H-B_m^H$.
The above product can be written as
\begin{equation}
1+\sum_{i=1}^n k^i\left(\sum_{m_1<\cdot\cdot\cdot <m_i} \Delta B_{m_1}^H \cdot\cdot\cdot\Delta B_{m_i}^H \right)
\end{equation}
Now, we can compute the expectation using Isserlis Theorem. This gives us that
\begin{eqnarray}
&&E(A_m)\nonumber\\
&&=E\left(\Pi_{m=1}^n (1+k\Delta B_m^H)\right)\nonumber\\
&&=1+\sum_{i=1}^n k^i\left(\sum_{m_1<\cdot\cdot\cdot< m_i} E(\Delta B_{m_1}^H \cdot\cdot\cdot\Delta B_{m_i}^H )\right)\nonumber\\
&&=1+\sum_{i=1}^n k^i\left(\sum_{m_1<\cdot\cdot\cdot <m_i}\sum_{p\in \mathcal{P}_i}\Pi_{(k,j)\in p}Cov( \Delta B_{m_k}^H , \Delta B_{m_j}^H ) \right)\nonumber\\
&&=1+\sum_{i=2l}^n k^i\left(\sum_{m_1<\cdot\cdot\cdot <m_i}\sum_{p\in \mathcal{P}_i}\Pi_{(k,j)\in p}\rho_H(| m_j-m_k|)\right),\nonumber\\
\end{eqnarray}
where we have used that partitions $\mathcal{P}_i$ exist only if $i$ is even. Then, as $k^i$ is positive for all even $i$ and all the $\rho_H(| m_j-m_k|)$ are strictly negative, $E(A_n)$ is decreasing. This allows us to complete the proof.

\end{proof}
Even when the model (\ref{eqbeta}) reproduces some empirical properties of $\beta$, a numerical analysis shows that it has to be modified before being an adequate model. More precisely, we can see in Figure \ref{Fig:beta_simulation} that the variability of the paths is high, and that the process
$\hat{\beta}$ can even become negative (see also Figure \ref{Fig:Beta max min}). Moreover, the estimated correlation between two consecutive increments of  $\hat{\beta}$ does not coincide with the observed for the $\beta$ returns.

\begin{figure}[H]
\centering
\includegraphics[scale=0.18]{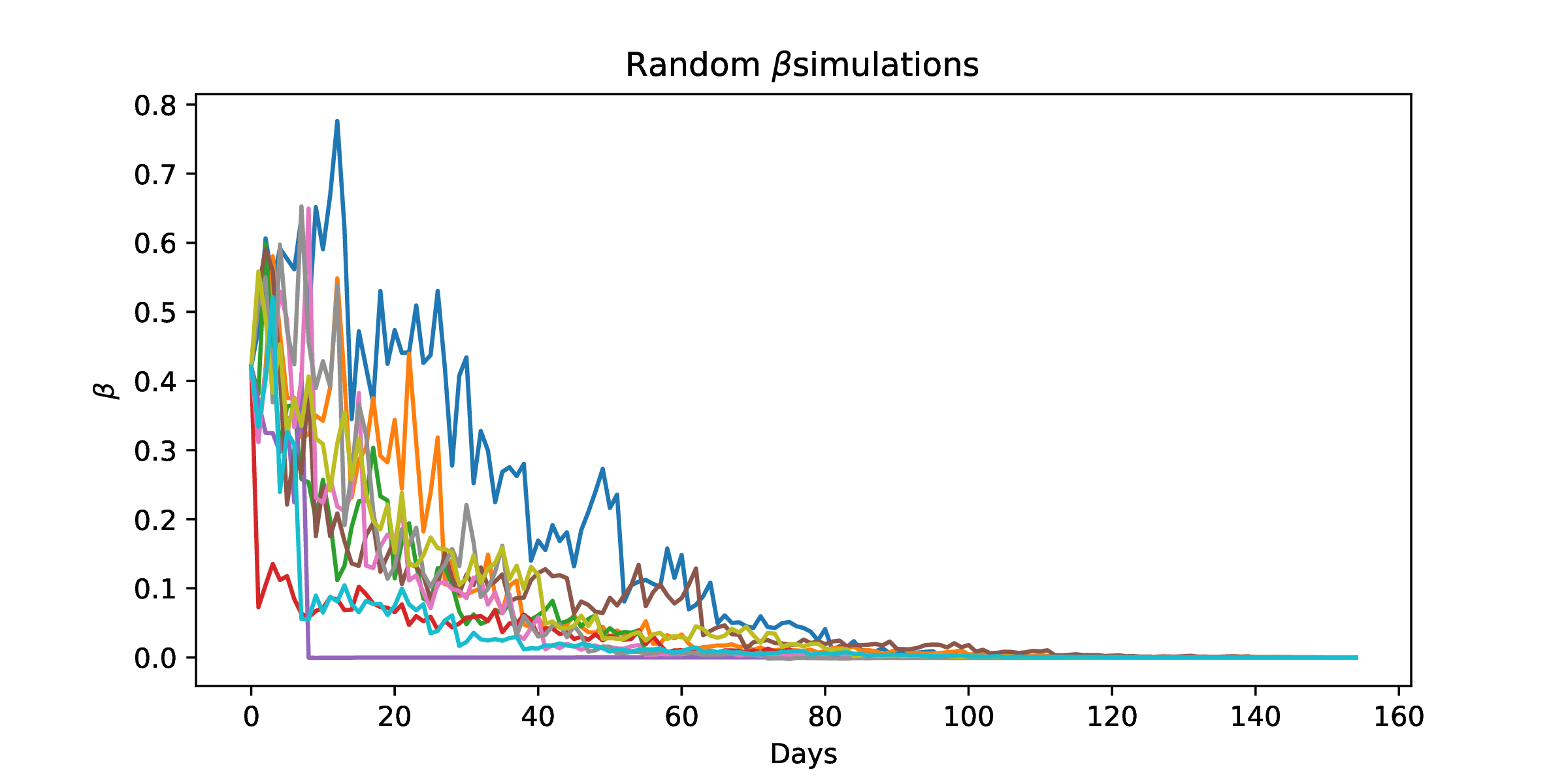}
\caption{10 possible paths of $\hat\beta$ for $H=0.1$.}
\label{Fig:beta_simulation}
\end{figure}

A way to reduce the variance of the paths is obviously to define a model of the type
\begin{equation}
\label{eqbeta}
\beta_{n+1}=\sum_{i=1}^m \beta_{n}^i,
\end{equation}
where $\beta^i, i=1,..,m$ are given by
$$
\beta^i_{n+1}=\beta^i_{n}(1+c_\beta^i(B_{n+1}^{i,H}-B_n^{i,H})),
$$
being $B^{i,H}, i=1,...,m$ $m$ independent fractional Brownian motions with Hurst parameter $H<\frac12$ and where $c_\beta^i, i=1,...,n$ are positive constant. The interpretation of (\ref{eqbeta}) is intuitive: the observed process $\beta$ is really the sum of different stochastic $\beta^i$ that correspond to the particular transmission rates in different groups, locations, etc.
 In Figures \ref{Fig:Beta max min} and \ref{Fig:Beta max min m10}we can observe, for $H=0.1$, the behavior of the cases $m=1$ (where the volatility is too big), and $m=10$, where the area delimited by the the maximum and minimum paths contain the observed values of $\beta$.

\begin{figure}[H]
\centering
\includegraphics[scale=0.18]{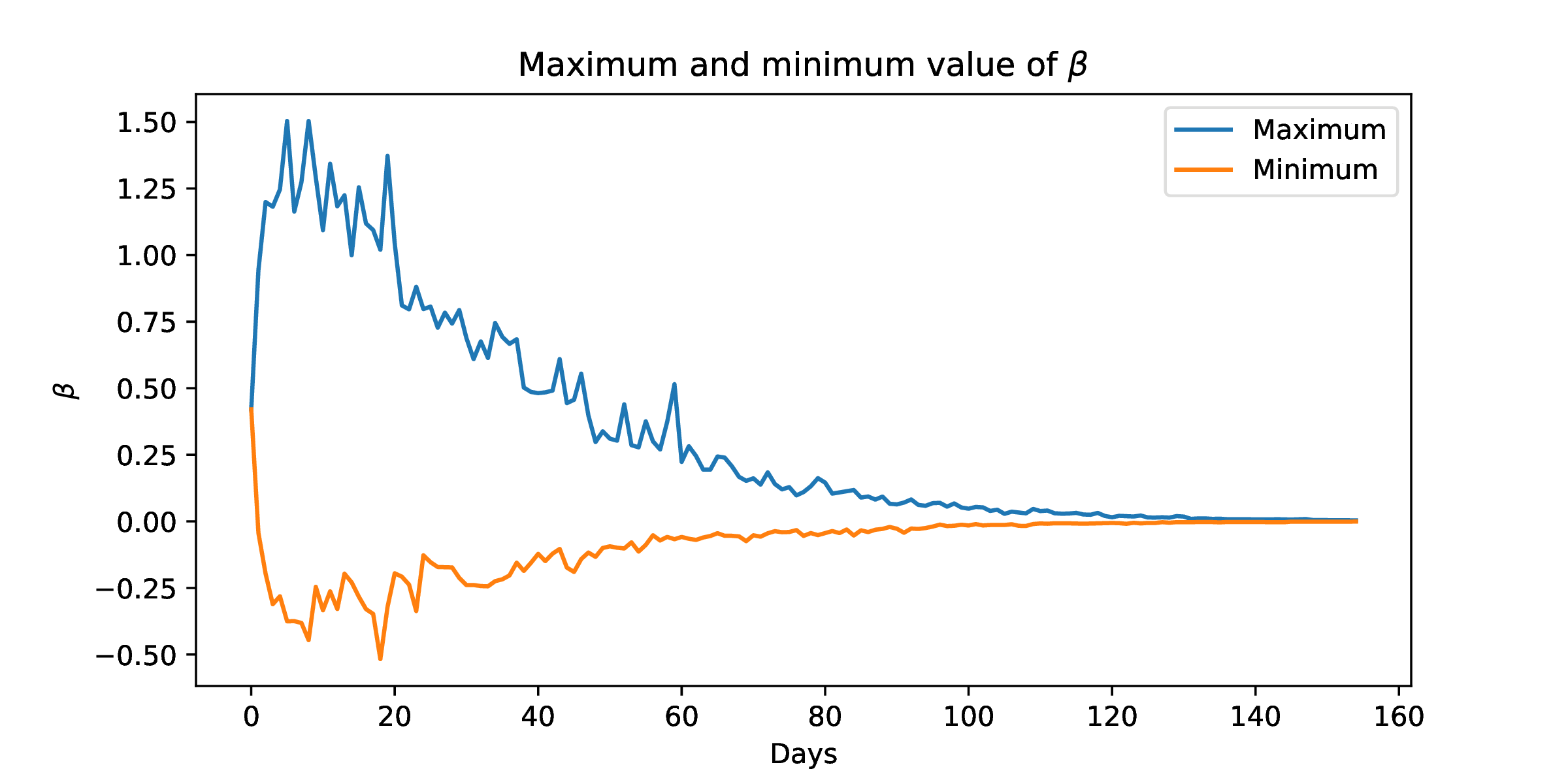}
\caption{Maximum and minimum of the simulated paths of $\hat\beta$.}
\label{Fig:Beta max min}
\end{figure}

\begin{figure}[H]
\centering
\includegraphics[scale=0.18]{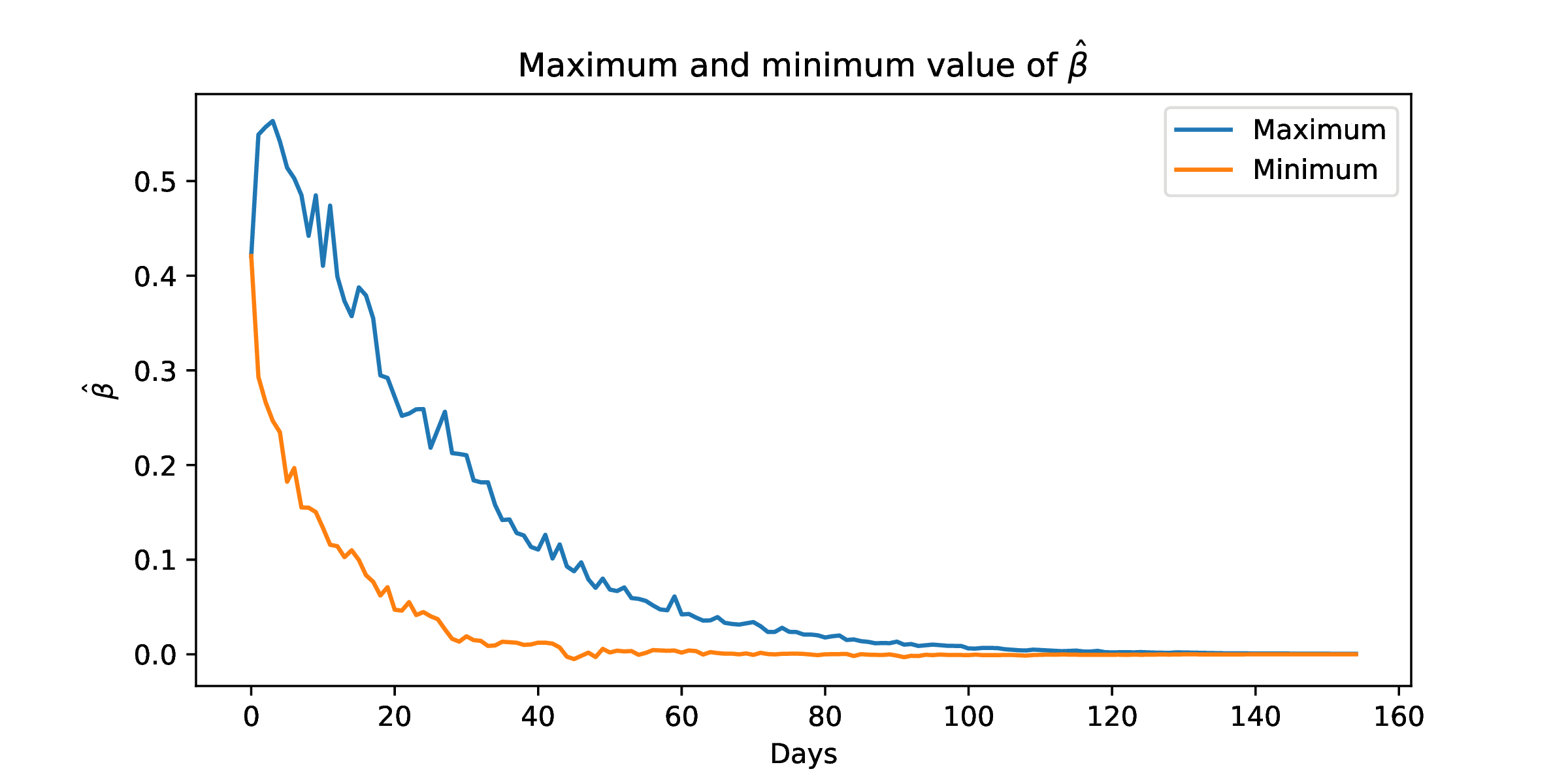}
\caption{Maximum and minimum of the simulated paths of $\beta$ and $m=10$.}
\label{Fig:Beta max min m10}
\end{figure}

\subsection{A stochastic model for \texorpdfstring{$\mu$}{mu} and \texorpdfstring{$\gamma$}{gamma}.}
As pointed out in \ref{Section descriptive}, the linear relationship between diagnosed (with 4 days lag) and death, suggests modeling $\mu$ simply as
$$\mu_n=c_\mu\beta_{n-4}I_{n-4}/I_n.$$ On the other hand, the observations in Section 2 lead to a model for $\gamma$ of the type
$$
\gamma_n=c_\gamma^1[(I+R+D)_{-30n}/(I+R+D)_n]\exp(c_\gamma^2 (B_{n+1}^{H_{\gamma}}-B_n^{H_{\gamma}})),
$$
for some positive constants $c_\gamma^1,c_\gamma^2$ and for some adequate Hurst parameter $H_{\gamma}$.

\subsection{The global model}
The above models for $\beta, \gamma$, and $\mu$ lead to the following stochastic SIRD model:
\begin{equation}
\begin{cases}
\beta_{n+1}=\sum_{i=1}^{10} \beta_{n}^i, \hspace{0.2cm}, with \hspace{0.2cm} \beta^i_{n+1}=\beta^i_{n}(1+c_\beta(B_{n+1}^{i,H}-B_n^{i,H}))\\
\mu_{n}=c_\mu\beta_{n-4}\frac{I_{n-4}}{I_n}\\
\gamma_n=c_\gamma^1[(I+R+D)_{-30n}/(I+R+D)_n]\exp(c_\gamma^2 (B_{n+1}^{H_{\gamma}}-B_n^{H_{\gamma}})\\
S_{n+1}=S_n-\beta_n I_n \\
I_{n+1}=I_n(1+\beta_n -\gamma_n-\mu_n)\\
R_{n+1}=R_n+\gamma I_n\\
D_{n+1}=D_n+\mu I_n.
\end{cases}
\end{equation}
Notice that only 7  parameters have to be calibrated are  $m$, $H$, $H_{\gamma}$ and $c_\beta$, $c_\mu$,$c_\gamma^1$, $c_\gamma^2$. We see in the following section how the set of the first three parameters $m$, $H$, $H_{\gamma}$, that define the driving processes of the model,  is chosen based on empirical observations, while the last group is calibrated by means of a classical least squares method.

\section{Calibration}

\subsection{The steps of the calibration process}
The calibration procedure is as follows.

\begin{itemize}
\item In a first step, we fix reasonable values of $m$ and the Hurst parameters according to observed data, and
\item fixed $m$, $H$, and $H_{\gamma}$, we calibrate  $c_\beta$, $c_\mu$,$c_\gamma^1$, $c_\gamma^2$ by a least squares method.
\end{itemize}
Step 1 focuses on choosing adequate driving random processes for the model. As estimating with precision and robustness this group of parameters is not straightforward (see for example Glotter (2007)), so we simply proceed empirically. More precisely, we have seen in Section 2 that the maximum and the minimum paths in the case $H=0.1$ and $m=10$ envelope the observed $\beta$ time series. Moreover, the observed correlation of beta returns (for lag=1)  is equal to $-0.347$, while (from a 1000 simulations sample), we estimate this quantity to be  $-0.314$ for $H=0.1$ and $m=10$. This leads to choose $m=10, H=0.1$ in our model.

\medbreak

In order to choose $H_\gamma$, we compute the autocorrelation function of
$$
\log\left(\frac{\gamma_n}{[(I+R+D)_{-30n}/(I+R+D)_n]}\right).
$$
At lag=1, this autocorrelation function is equal to $0.244$. Then, taking into account Equation (\ref{correlation}), we get a estimation of $H_\gamma=0.657$. Then we choose $H_{\gamma}=0.6$ for the sake of simplicity.

\subsection{Initial guess for  \texorpdfstring{$c_\beta$}{cbeta}, \texorpdfstring{$c_\mu$}{cmu}, \texorpdfstring{$c_\gamma^1$}{cgamma1}, \texorpdfstring{$c_\gamma^2$}{cgamma2} and global calibration}
In order to be able to start the calibration of the other parameters, we will need to have an initial guess.
\medbreak
 To find the parameter $c_\beta$, we are going to fit the average $\hat{\beta}$ against the realized $\beta$. This gives us a value of 0.32250809.
\medbreak
For $\mu$, as we have seen previously, we can obtain a nice estimation by doing a linear regression between the infected and death people. The slope ( 0.13904755) will be the initial guess for $c_\mu$.

\medbreak
In the case of $\gamma$, in order to obtain the variables $c_\gamma^1$, $c_\gamma^2$, we  minimize the distance between the average $\hat{\gamma}$ and the observed $\gamma$, as well as  the distance between the standard deviation of the returns of $\hat{\gamma}$ and the observed $\gamma$. This gives us (0.03512639,0.5) as initial guess.

\medbreak

Once we have the initial guess, we find the best possible parameters to minimize the distance between the infected, the death and the recovered time series as well as the empirical variance of the $\gamma$ by OLS.

\section{Results}
Following the procedure in the previous section, we get the following estimation of the parameters of the model:
$$c_\beta=0.32102563, c_\mu=0.13981687, c_\gamma^1=0.03383898, c_\gamma^2=0.54395798.$$

\subsection{The average paths of the global calibration}
Simulating the model and taking the corresponding mean paths we fit the observed data, as we can see in Figure \ref{global}.

\begin{figure}[H]
\centering
\includegraphics[scale=0.18]{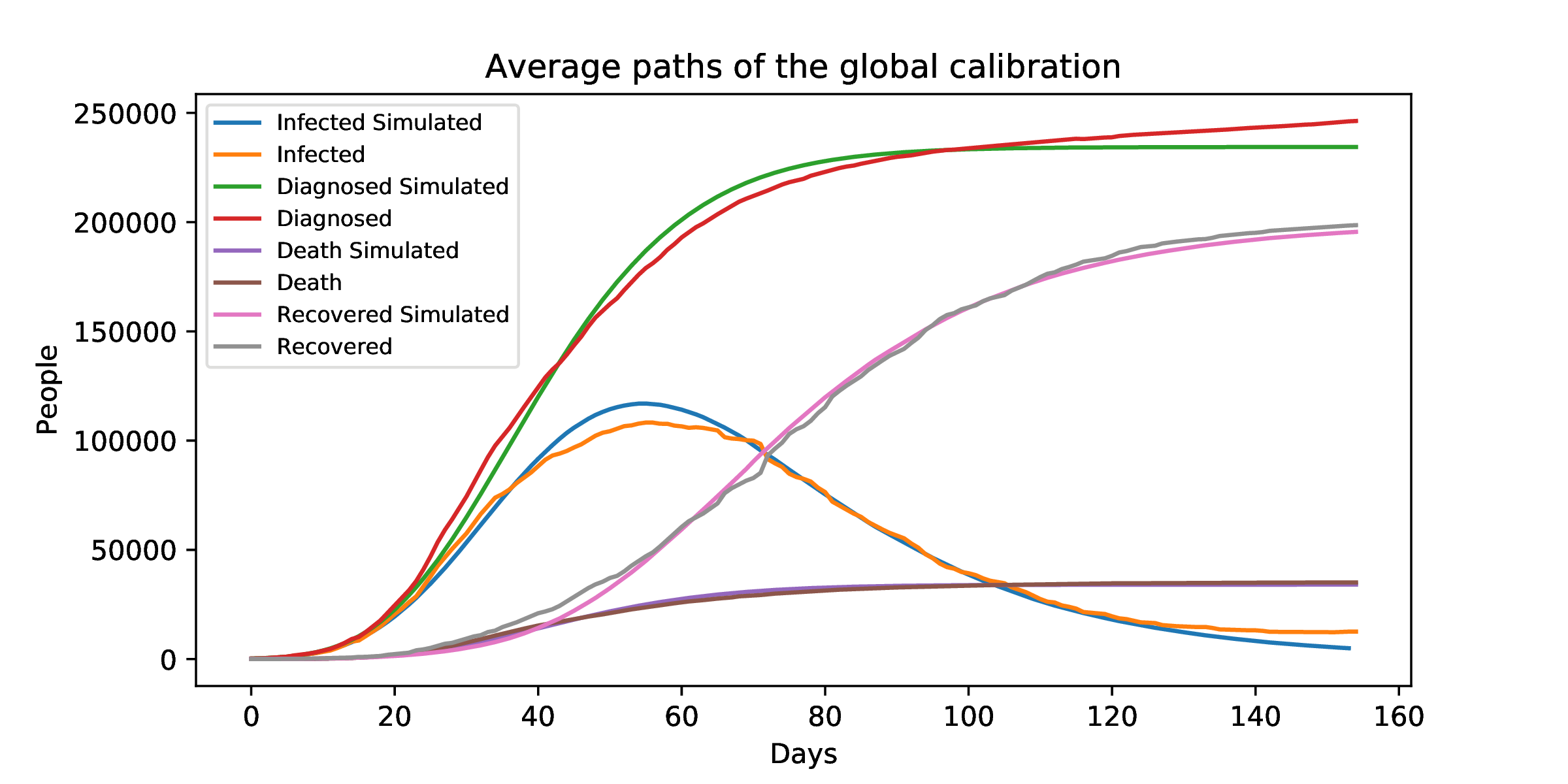}
\caption{Simulated mean paths}
\label{global}
\end{figure}

\subsection{The average paths of \texorpdfstring{$\beta$}{beta}, \texorpdfstring{$\gamma$}{gamma}, and \texorpdfstring{$\mu$}{mu}}
The same analysis can be done for $\beta$, $\gamma$, and $\mu$:

\begin{figure}[H]
\centering
\includegraphics[scale=0.18]{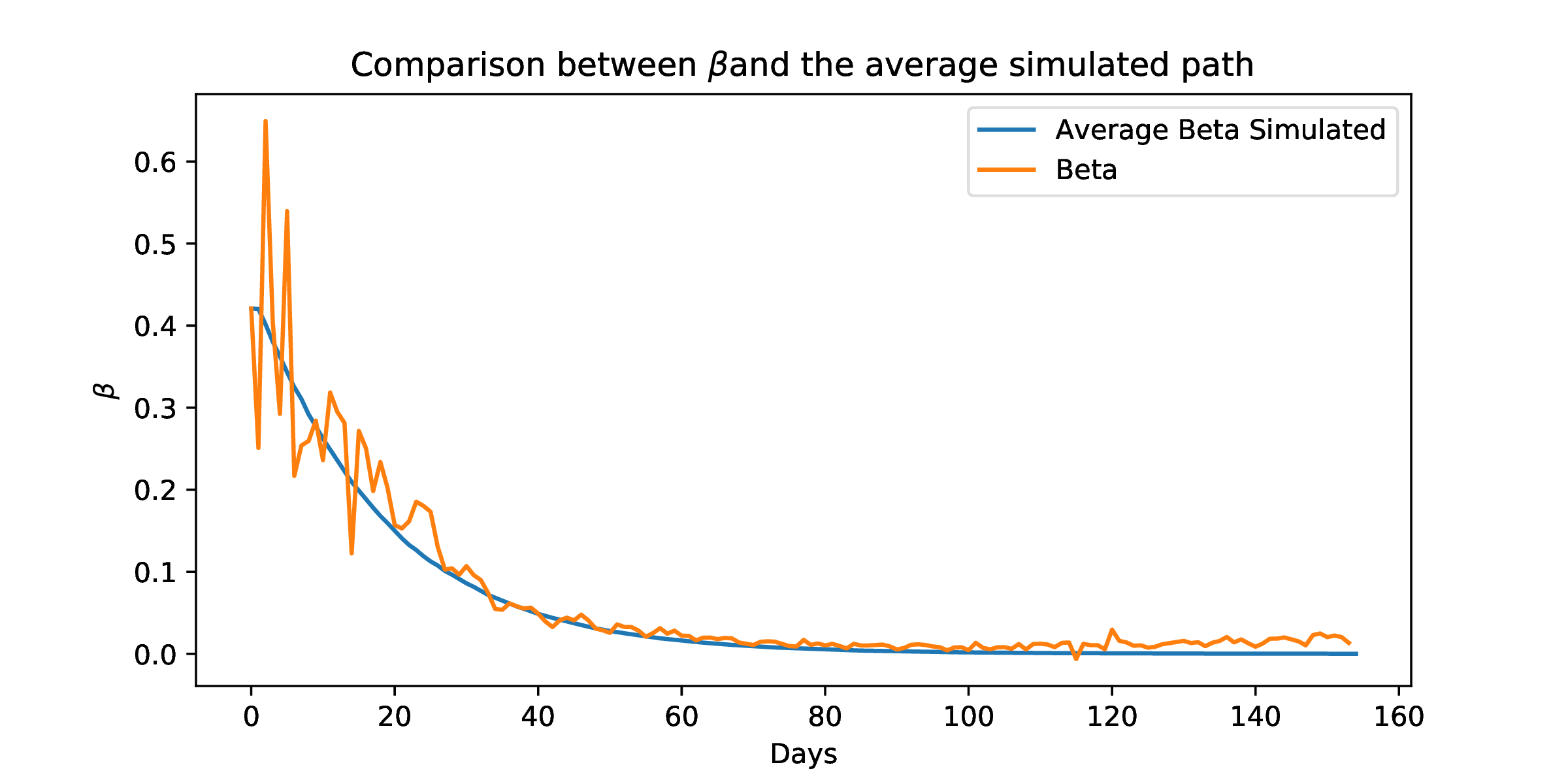}
\caption{beta}
\end{figure}

\begin{figure}[H]
\centering
\includegraphics[scale=0.18]{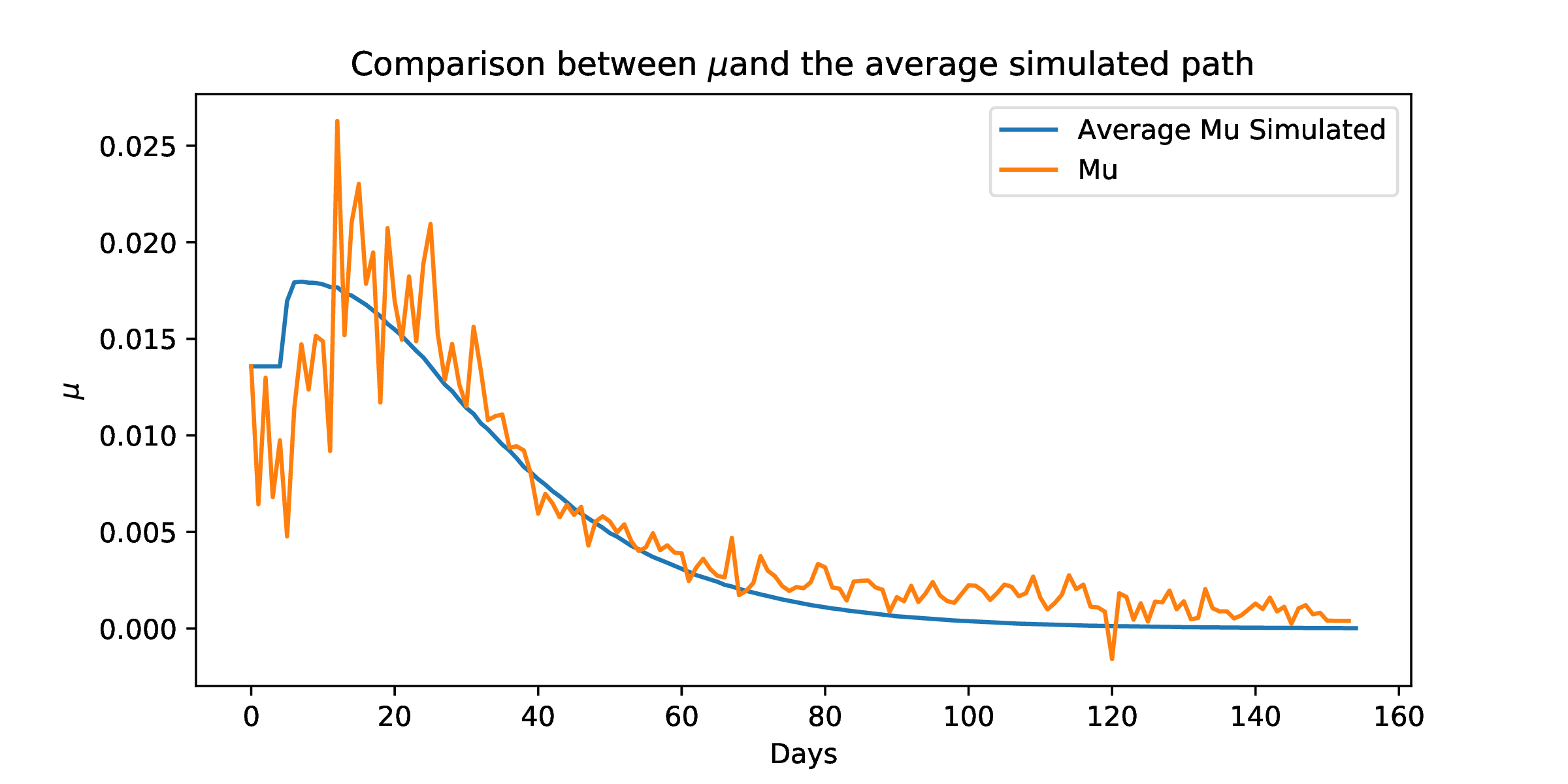}
\caption{mu}
\end{figure}

\begin{figure}[H]
\centering
\includegraphics[scale=0.18]{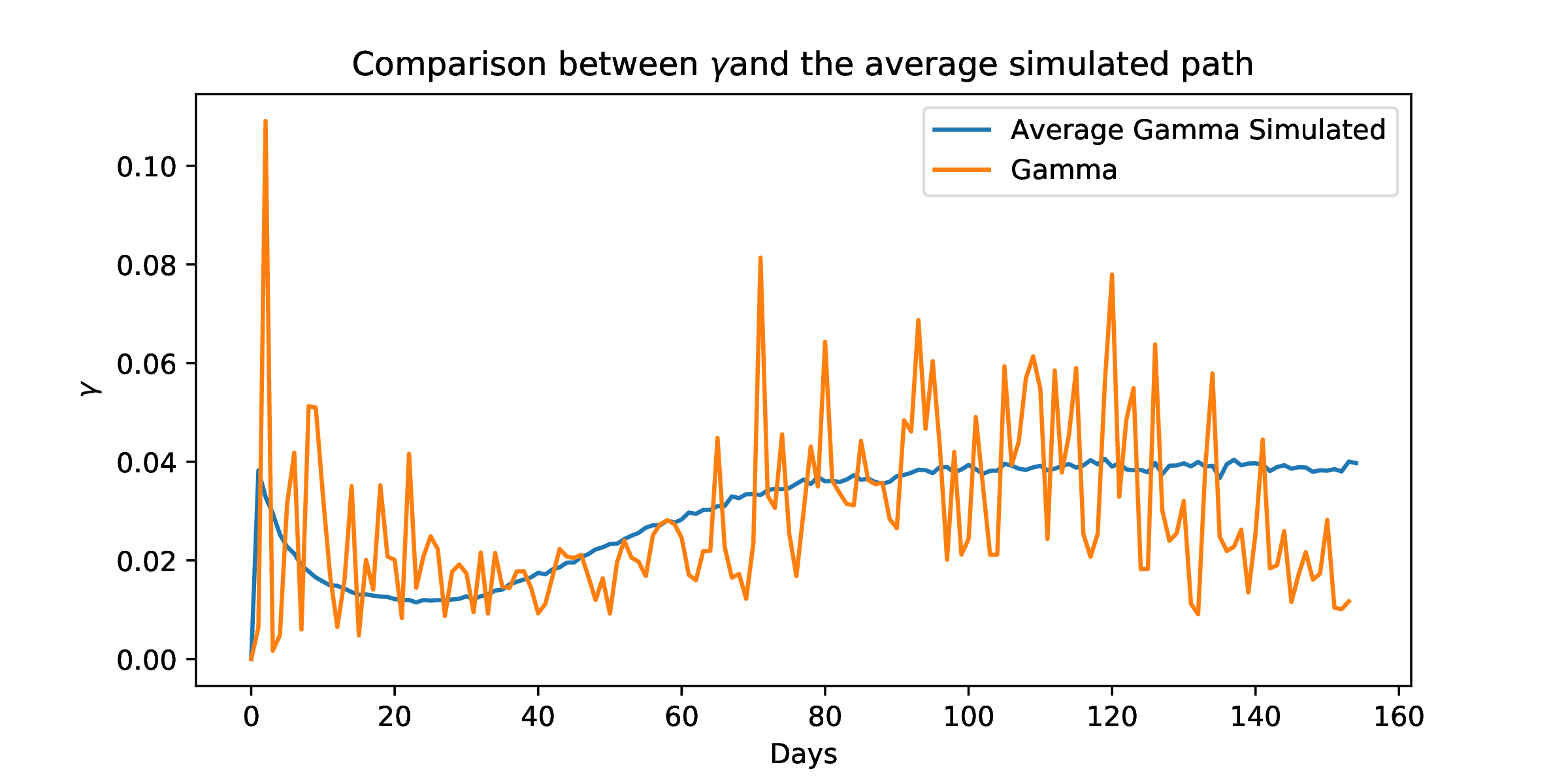}
\caption{gamma}
\end{figure}

\subsection{Some simulated paths}

Some simulated paths for $\beta, \gamma$ and $\mu$ can be observed in the following figures:

\begin{figure}[H]
\centering
\includegraphics[scale=0.18]{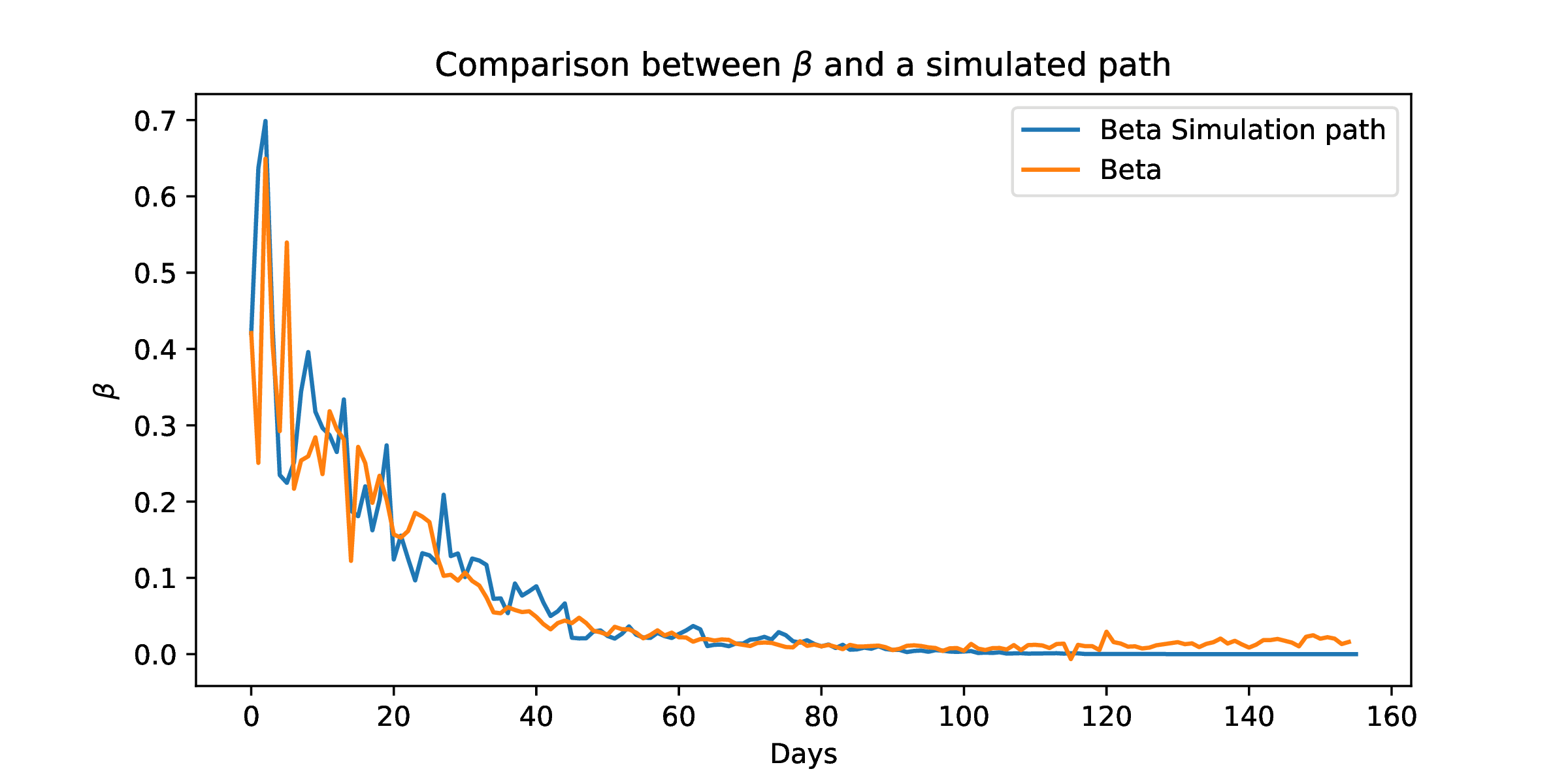}
\caption{$\beta$}
\end{figure}

\begin{figure}[H]
\centering
\includegraphics[scale=0.18]{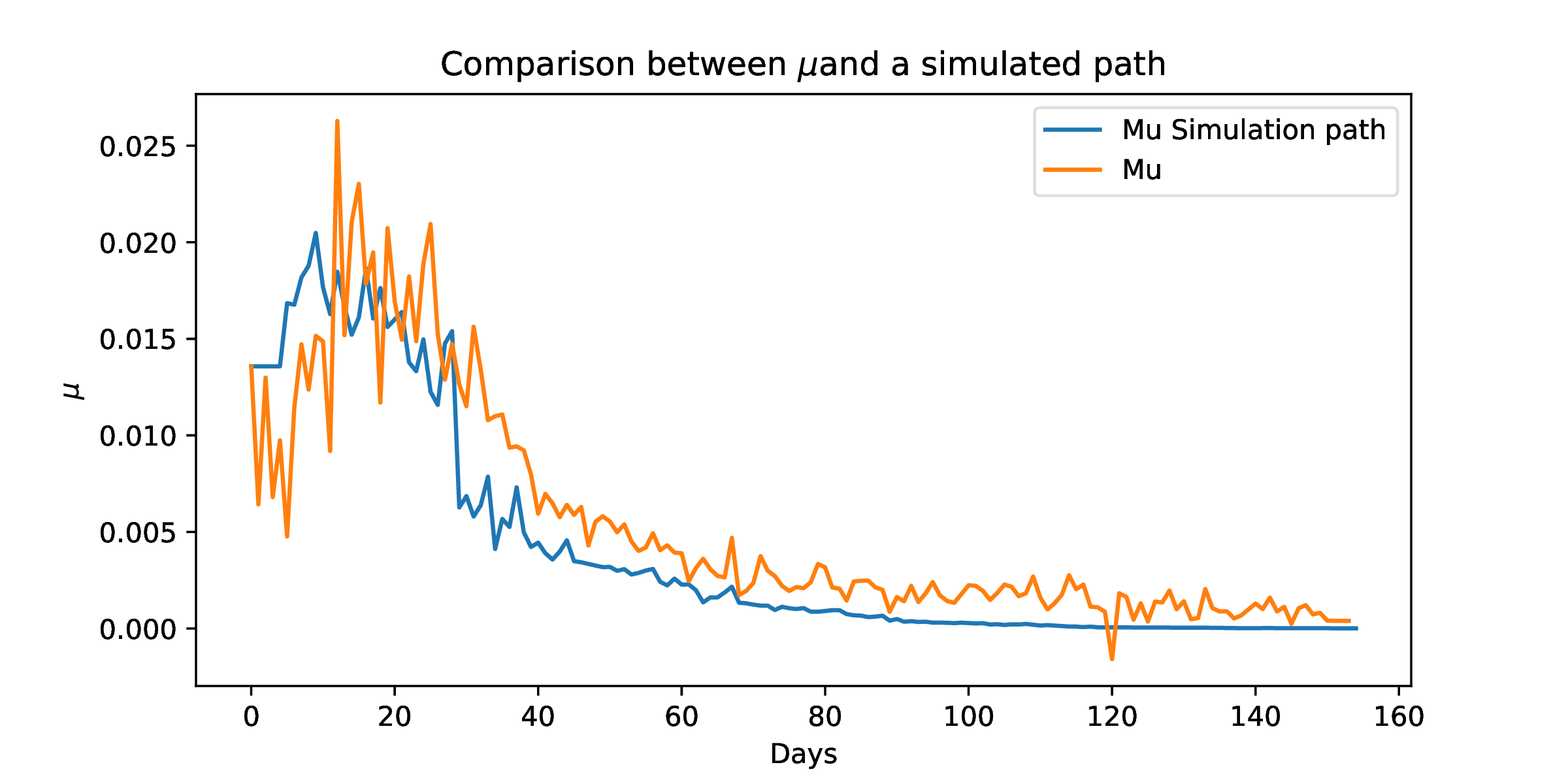}
\caption{$\mu$}
\end{figure}

\begin{figure}[H]
\centering
\includegraphics[scale=0.18]{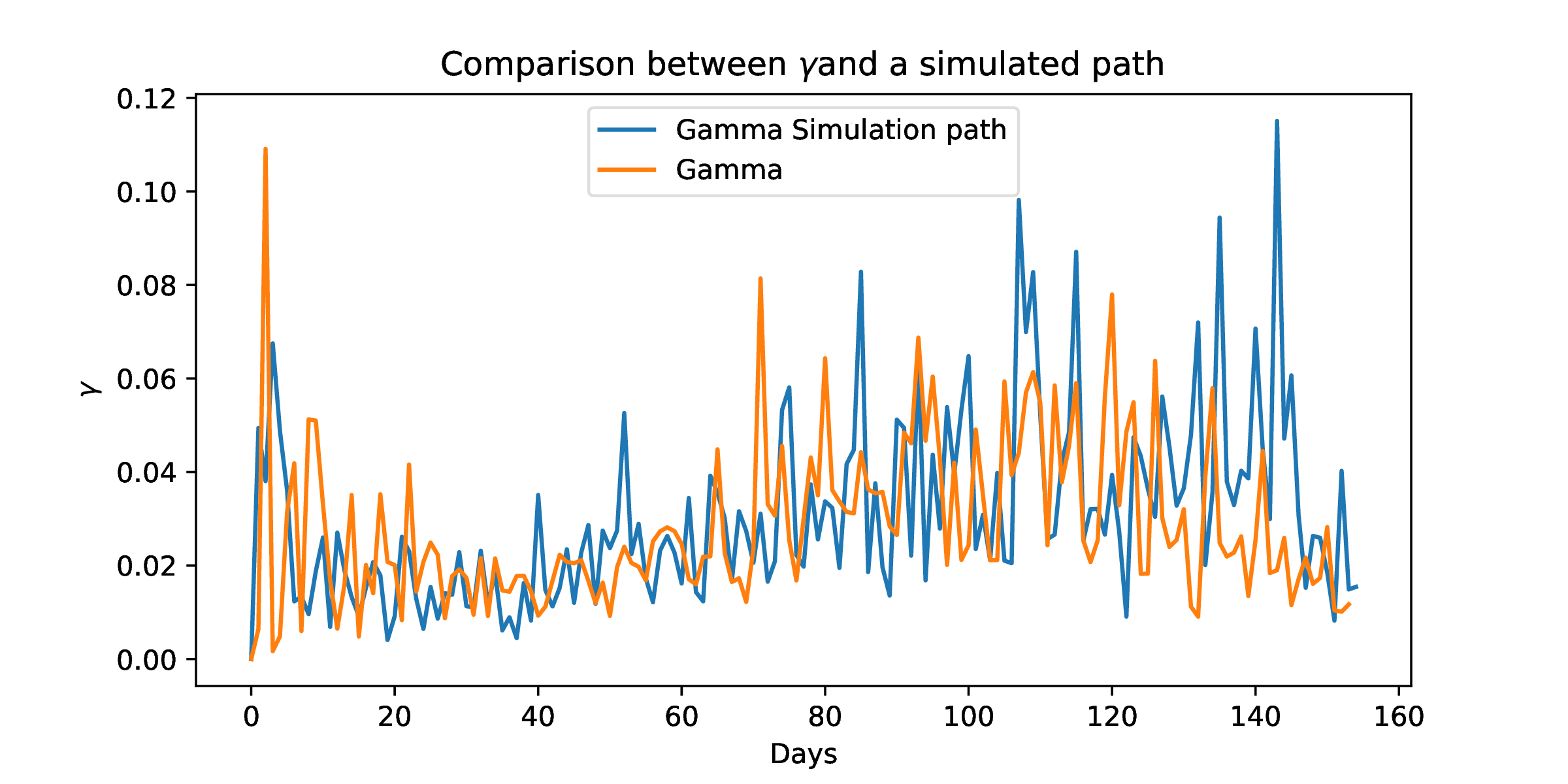}
\caption{$\gamma$}
\end{figure}

Different random scenarios can be seen in the following figures:

\begin{figure}[H]
\centering
\includegraphics[scale=0.18]{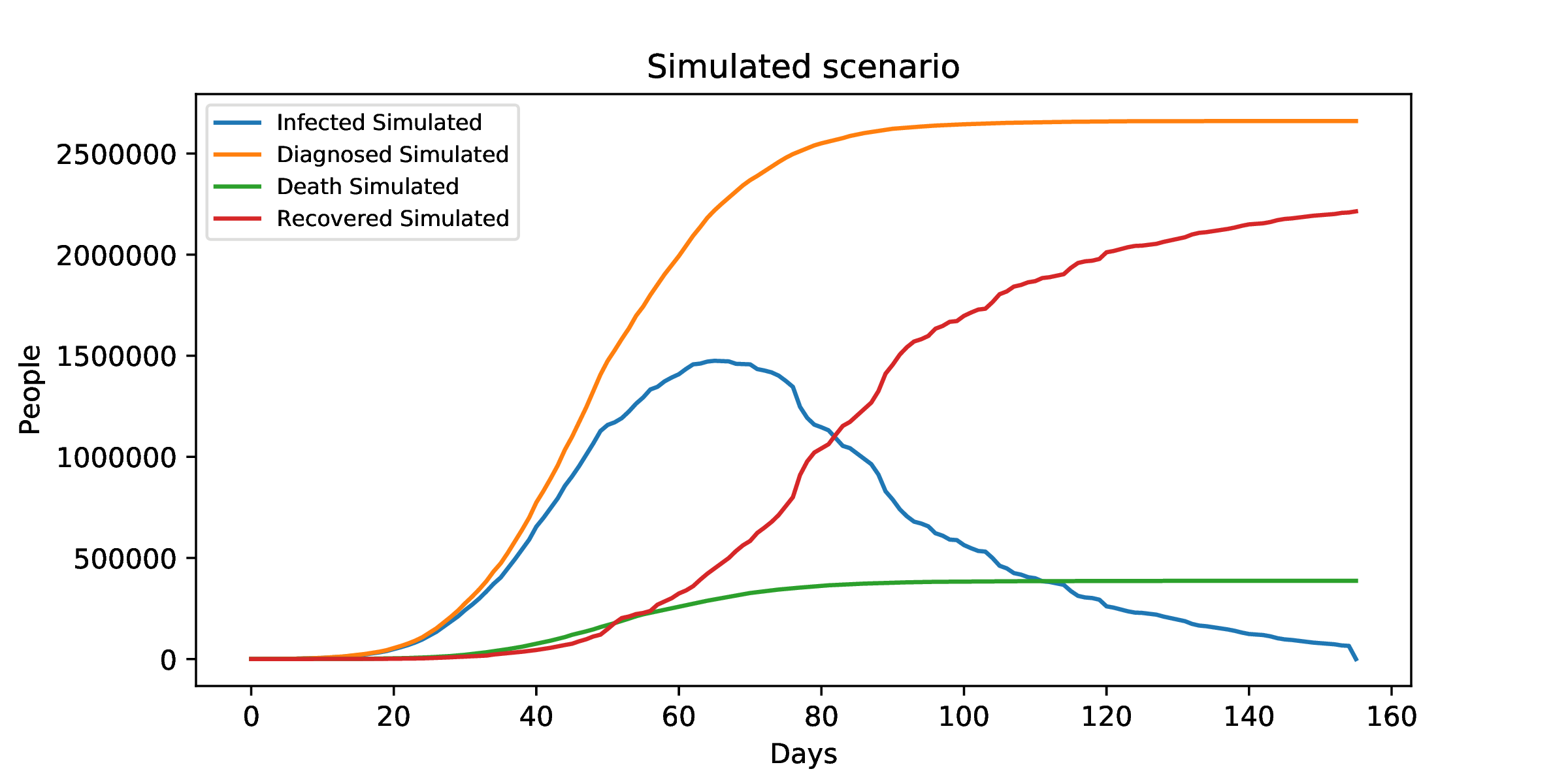}
\caption{Simulated scenario}
\end{figure}
\begin{figure}[H]
\centering
\includegraphics[scale=0.18]{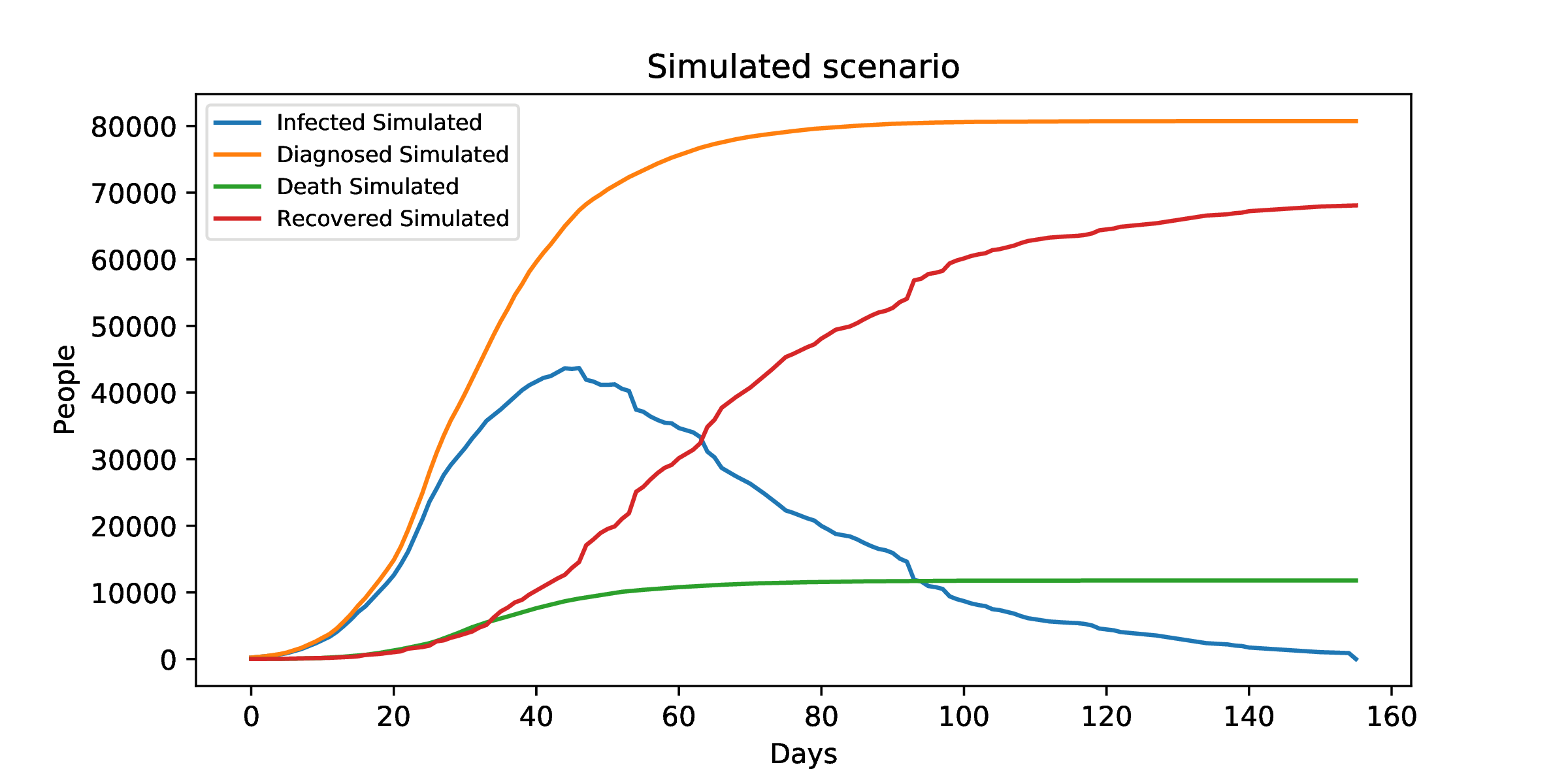}
\caption{Simulated scenario}
\end{figure}
\begin{figure}[H]
\centering
\includegraphics[scale=0.18]{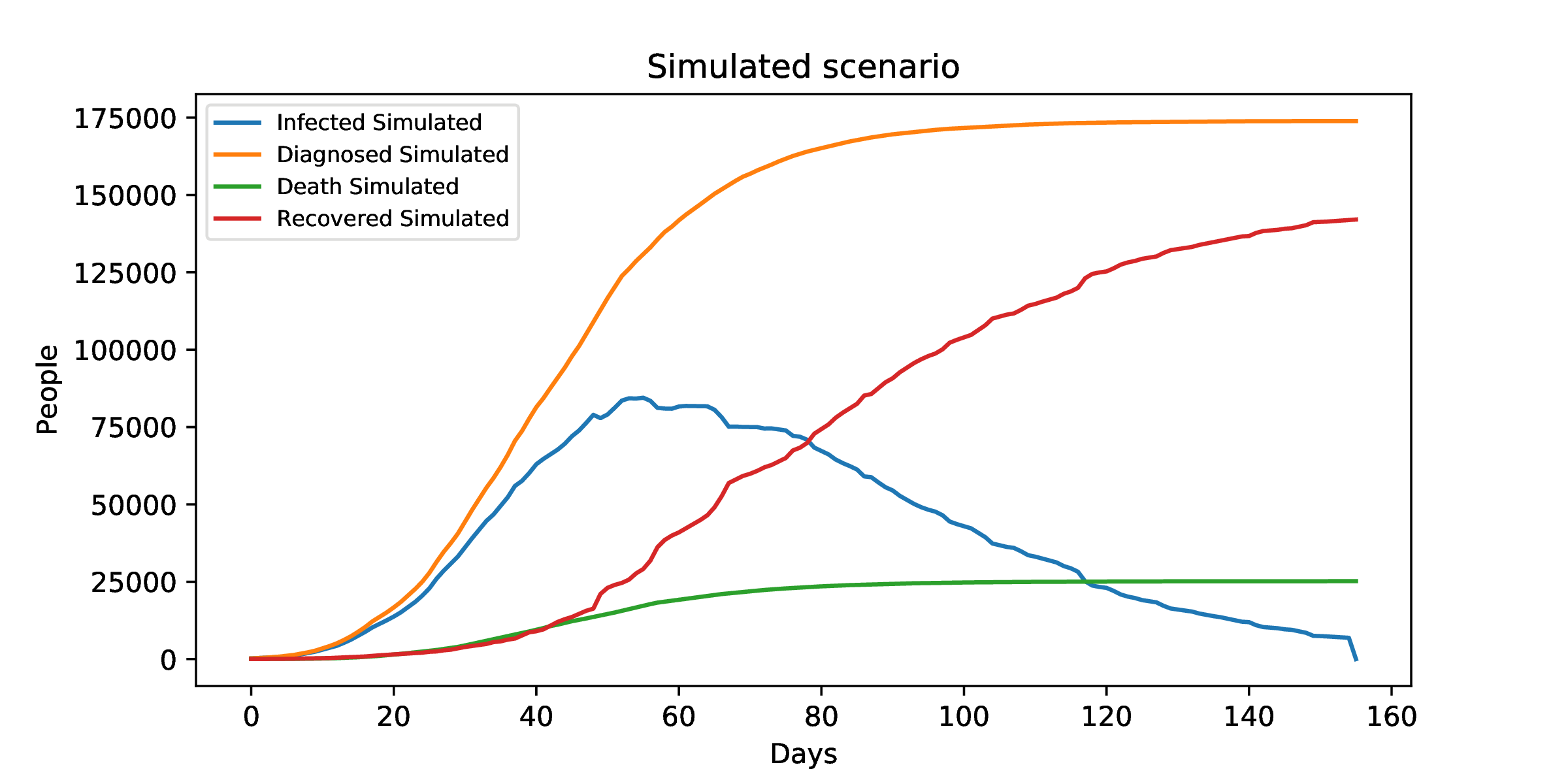}
\caption{Simulated scenario}
\end{figure}

\section{Discussion}

In the present work, we have extended the classic SIRD model including stochastic parameters and we obtain an easy-to-calibrate pure probabilistic model. We have been able to reproduce the evolution of the parameters of the Italian outbreak using only seven parameters. The properties of the fractional Brownian motion and the relationship between the parameters of the model are able to reproduce the exponential and logistic trends observed for example in \cite{C}. Therefore, we have been able to fit the empirical data as well as imitate the noise of the variables. One of the advantages of having a stochastic model is that we are capable of generating a wide variety of scenarios.

One challenging problem is now to model the evolution after a lockdown. This translates into a new dynamics of $\beta$, probably driven by a fBm with $H>\frac12$. Moreover, the comparison between different countries that applied different policies during the pandemic would be of great interest.


\end{document}